\newtheorem{theorem}{Theorem}[section]
\newtheorem{lemma}[theorem]{Lemma}
\newtheorem{meta-theorem}[theorem]{Meta-Theorem}
\newtheorem{definition}[theorem]{Definition}
\definecolor{darkgreen}{rgb}{0,0.5,0}
\crefname{theorem}{Theorem}{Theorems}
\Crefname{lemma}{Lemma}{Lemmas}
\Crefname{observation}{Observation}{Observations}
\Crefname{equation}{}{}
\algnewcommand\algorithmicswitch{\textbf{switch}}
\algnewcommand\algorithmiccase{\textbf{case}}
\newenvironment{bracketenumerate}{\begin{enumerate}}{\end{enumerate}}
\providecommand{\customparagraph}[1]{\paragraph{#1}} 
\providecommand{\customsubparagraph}[1]{\paragraph{#1}}
\providecommand{\customsubsubsection}[1]{\subsection{#1}}
\newcommand{\eps}{\varepsilon}
\newcommand{\poly}{{\rm poly}}
\providecommand{\E}{{\rm \mathbb{E}}}
\providecommand{\Var}{{\rm \mathbb{V}ar}}
\newcommand*{\congested}{\textsf{CONGESTED-CLIQUE}}
\newcommand*{\mpc}{\textsf{MPC}}
\newcommand*{\congest}{\textsf{CONGEST}}
\newcommand*{\pram}{\textsf{PRAM}}
\providecommand{\matching}{\mathcal{M}}
\providecommand{\machines}{\mathbf{M}}
\providecommand{\memory}{\mathbf{S}}
\renewcommand{\paragraph}[1]{\vspace{0.15cm}\noindent {\bf #1}:}
\DeclarePairedDelimiter{\abs}{\lvert}{\rvert}
\let\oldabs\abs
\def\abs{\@ifstar{\oldabs}{\oldabs*}}
\newcommand{\FullOrShort}{full}
  \newcommand{\fullOnly}[1]{#1}
  \newcommand{\shortOnly}[1]{}
    \newcommand{\fullOnly}[1]{}
    \newcommand{\IncludePictures}[1]{}
\begin{document}

\date{}

\title{\textbf{Improved Deterministic Connectivity\\in Massively Parallel Computation}}

\author{
    Manuela Fischer \\
    \small{ETH Zurich}\\
    \small{\texttt{manuela.fischer@inf.ethz.ch}}
    \and
    Jeff Giliberti \\
    \small{ETH Zurich}\\
    \small{\texttt{jeff.giliberti@inf.ethz.ch}}
    \and
    Christoph Grunau\thanks{\scriptsize{Supported by the European Research Council (ERC) under the European Unions Horizon 2020 research and innovation programme (grant agreement No.~853109).}} \\
    \small{ETH Zurich}\\
    \small{\texttt{cgrunau@inf.ethz.ch}}
}

\date{}

\maketitle

\setcounter{page}{0}
\thispagestyle{empty}

\begin{abstract}
A long line of research about connectivity in the Massively Parallel Computation model has culminated in the seminal works of Andoni et al.~[FOCS'18] and Behnezhad et al.~[FOCS'19]. They provide a randomized algorithm for low-space \mpc\ with conjectured to be optimal round complexity $O(\log D + \log \log_{\frac m n} n)$ and $O(m)$ space, for graphs on $n$ vertices with $m$ edges and diameter $D$. Surprisingly, a recent result of Coy and Czumaj [STOC'22] shows how to achieve the same deterministically. Unfortunately, however, their algorithm suffers from large local computation time. 

We present a deterministic connectivity algorithm that matches all the parameters of the randomized algorithm and, in addition, significantly reduces the local computation time to nearly linear. 

Our derandomization method is based on reducing the amount of randomness needed to allow for a simpler efficient search. While similar randomness reduction approaches have been used before, our result is not only strikingly simpler, but it is the first to have efficient local computation. This is why we believe it to serve as a starting point for the systematic development of computation-efficient derandomization approaches in low-memory \mpc. 
\end{abstract}

\newpage
\section{Introduction}

Due to the ever-increasing amount of data available, memory has grown to become a major bottleneck, which makes many traditional graph algorithms inefficient or even inapplicable. To overcome this obstacle, inspired by the MapReduce paradigm \cite{MapReduce}, several computation frameworks for large-scale graph processing across multiple machines have been proposed. 
The Massively Parallel Computation (\mpc) model is a clean, theoretical abstraction of these frameworks and thus serves as a basis for the systematic study of memory-restricted distributed algorithms. Introduced by Karloff et al.~\cite{KSV10} and Feldman et al.~\cite{FMS+10} in 2010, it was later refined in a sequence of works and has become tremendously popular over the past decade. 

\customparagraph{\mpc\ Model} 
In the \mpc\ model, the distributed network consists of $\machines$ machines, having local memory $\memory$ each. The input is distributed across the machines and the computation proceeds in synchronous rounds. In each round, each machine performs an arbitrary \textit{local computation} and then communicates up to $\memory$ data. All messages sent and received by each machine in each round have to fit into the machine’s local space. 
The main complexity measure of an algorithm is its \textit{round complexity}, that is, the number of rounds needed by the algorithm to solve the problem. Secondary complexity measures of an algorithm are its \textit{global memory} usage---i.e., the number of machines times the memory per machine required---as well as the \textit{total computation} performed by machines to run the algorithm, i.e., the (asymptotic) sum of the local computation performed by each machine.

We focus on the design of fully scalable graph algorithms in the \textit{low-memory} \mpc\ model, where each machine has strongly sublinear memory. More precisely, an input graph  $G = (V,E)$, with $n$ vertices and $m$ edges, is distributed arbitrarily across machines with local memory $\memory = O(n^\delta)$ each, for some constant $0 < \delta < 1$, so that the global space is $\memory_{Global} = \Omega(n + m)$.

\customparagraph{Graph Algorithms and Connectivity} In this model, fundamental graph and optimization problems have recently gained a lot of attention. There is a plethora of work on the problems of connectivity, matching, maximal independent set, vertex cover, coloring, and many more (see, e.g., \cite{MPC-MIS-log2logn, MPC-MIS-loglogn, coyconnectivity2021, MPC-MIS-det-general, 8948671, doi:10.1137/1.9781611975482.99, CDP20ccb}). 

One particularly important (and arguably the most central) graph problem that has received increasing attention over the past few years is the one of connectivity. This is not only a problem of independent interest, but it serves as a subroutine for many algorithms. 

\begin{definition}[Connectivity Problem]
    Let $G = (V, E)$ be an undirected graph. The goal is to compute a function $cc \colon V \rightarrow \mathbb{N}$ such that every vertex $u \in V$ knows $cc(u)$ and for any pair of vertices $u, v \in V$, $u$ and $v$ are connected in $G$ if and only if $cc(u) = cc(v)$.
\end{definition}

A sequence of works \cite{andoni2018parallel, 10.1145/3293611.3331596, DBLP:journals/corr/abs-1807-10727, 8948671, 10.1145/3350755.3400249, 10.1145/3465084.3467951, 8948671} on this problem culminated in a randomized algorithm by Behnezhad et al.~\cite{8948671} that finds all connected components of a graph with diameter $D$ in $O(\log D + \log \log_{\frac m n} n)$ rounds.

In a very recent breakthrough, Coy and Czumaj~\cite{coyconnectivity2021} obtained the same round complexity with a deterministic algorithm.
Their derandomization approach, however, comes at a cost of heavy local computation, which makes it impractical for large-scale applications. 

\customparagraph{Deterministic Algorithms and Derandomization} While the problem of connectivity is of independent interest, it is instructive to view the above results in a broader context of deterministic algorithms and derandomization. 

Notably, for almost a decade, (almost) all the research in the domain of Massively Parallel Computation has focused on the study of randomized algorithms. Only recently, a sequence of works has aimed at exploring the power of the (low-memory) \mpc\ model restricted to deterministic algorithms \cite{BKM20, CDP20ccb, CDP20cc1, MPC-MIS-det-general, coyconnectivity2021}. They demonstrate that several graph problems can be solved deterministically with (asymptotic) complexity bounds that are comparable to those of the randomized algorithms. The main ingredients of these results are derandomization methods specifically tailored to the low-memory \mpc\ model: they are designed to cope with the limited memory per machine while exploiting the power of \textit{local computation} and \textit{all-to-all communication} in this setting.

This quest for efficient derandomization techniques has become one of the main problems of the area. Unfortunately, current derandomization frameworks suffer from long local running time (e.g., large polynomial or even exponential in $n^\delta$). In fact, as noted in \cite{CDP20ccb}, allowing heavy local computation might provide an advantage in the context of distributed and parallel derandomization. However, especially in  performance-oriented scenarios, local computation may quickly become a critical parameter. It thus emerges as a natural direction to study deterministic algorithms whose total computation matches that of their randomized counterparts.

\subsection{Our Contribution}

We address this issue by presenting the first computation-efficient deterministic algorithm for the problem of graph connectivity in the strongly sublinear memory regime of \mpc.
\begin{theorem}[Deterministic Connectivity]\label{thm:main}
    There is a strongly sublinear \mpc\ algorithm that given a graph with diameter $D$, identifies its connected components in $O(\log D + \log \log_{\frac m n} n)$ rounds deterministically using $O(n + m)$ global space and $\tilde{O}(m)$ total computation. 
\end{theorem}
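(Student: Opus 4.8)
The plan is to take the randomized $O(\log D + \log \log_{m/n} n)$-round connectivity algorithm of Behnezhad et al.\ as a template and to derandomize only the single randomized primitive it relies on, doing so in a way that costs $O(1)$ extra rounds per phase, keeps global space $O(n+m)$, and uses near-linear total work. Recall that algorithm proceeds in $O(\log D + \log \log_{m/n} n)$ phases; each phase alternates a \emph{graph exponentiation} step — where every (super)vertex learns its $2^i$-hop neighborhood up to the local-memory budget — with a \emph{contraction} step that merges each vertex into a neighboring center, and the only essential use of randomness is in the selection of centers (equivalently, a random leader or random sparsification step) whose purpose is to guarantee that the number of active vertices, or the maximum degree, drops by the required factor while never merging two distinct components. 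So the first step is to isolate this primitive abstractly: given the current contracted graph, produce a contraction that (i) respects the connected components, (ii) shrinks the instance by the needed amount, and (iii) is computable in $O(1)$ rounds and $\tilde{O}(\text{current size})$ work.

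Second, I would observe that this primitive does not need full independence. The guarantee it provides is a concentration statement about a sum of indicator variables — roughly, how many vertices fail to be absorbed, or how the degree sequence evolves — with one indicator per vertex or per edge, each depending on the hash values of only $O(1)$ vertices. Hence $O(1)$-wise independence (or, in the worst case, $\mathrm{polylog}(n)$-wise) suffices, and the center-selection randomness can be drawn from a hash family with a seed of only $O(\log n)$ bits. A uniformly random seed achieves the quality promised by the randomized analysis; the remaining task is to find \emph{deterministically} a seed whose quality is at least this expectation, cheaply.

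Third — and this is the heart of the argument — I would derandomize the seed choice by the method of conditional expectations, implemented to fit the $O(1)$-round and near-linear-work budget. The pessimistic estimator is the conditional expected number of ``bad'' vertices/edges; because it decomposes as a sum over the $O(n+m)$ objects and each term depends only on a short window of the seed, the whole estimator — and hence the best extension of a partially fixed seed — can be evaluated by having each machine locally aggregate the contributions of the objects it holds, followed by one round of global aggregation. The point that keeps this at $O(1)$ rounds rather than $\Theta(\log n)$ is that the seed is split into $O(1)$ chunks, each chunk indexing a family small enough to be searched exhaustively against the decomposed estimator in a single aggregation round; no copy of the graph is duplicated, so global space stays $O(n+m)$. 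Summing over the $O(\log D + \log \log_{m/n} n)$ phases, using that the per-phase instance size is geometrically controlled after the first couple of phases, yields total computation $\tilde{O}(m)$ and leaves the round complexity asymptotically unchanged.

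The main obstacle I expect is precisely reconciling the three budgets in the third step: the seed must be long (and independent) enough that the limited-independence analysis still certifies the contraction guarantee, yet short and structured enough that an optimal extension can be found in $O(1)$ rounds with only $O(n+m)$ global memory and without re-reading the graph a super-constant number of times. Designing the pessimistic estimator to be simultaneously a low-sensitivity sum — so that aggregation is fast and non-duplicative — and a faithful certificate of the required shrinkage is where the real work lies; the remaining pieces (the exponentiation steps, propagation and relabeling of component identifiers, and the round/space/work accounting) follow the known randomized template.
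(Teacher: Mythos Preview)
Your high-level plan --- take Behnezhad et al.'s randomized algorithm and derandomize its random primitives in $O(1)$ rounds per phase --- matches the paper's. But there is a genuine gap in the step you yourself flag as ``where the real work lies,'' and the paper closes it by a route different from conditional expectations.

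The gap is quantitative. With a seed of $\Theta(\log n)$ bits split into $O(1)$ chunks of $\Theta(\delta\log n)$ bits each, every chunk indexes $n^{\Theta(\delta)}$ candidate extensions. To choose the best extension you must evaluate your pessimistic estimator --- a sum over $\Theta(n+m)$ local terms --- once per candidate. Even at $\tilde{O}(1)$ time per term, that is $\Theta((n+m)\cdot n^{\delta})$ total work per chunk, not $\tilde{O}(m)$. This is precisely why Coy and Czumaj's conditional-expectation derandomization, which your proposal essentially reproduces, incurs $\poly(n)$ total computation; your remark that ``no copy of the graph is duplicated'' controls global space, not total work. You have not supplied any mechanism to beat the $n^{\delta}$ factor, and indeed with an $\Theta(\log n)$-bit seed and exhaustive per-chunk search there is none.

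The paper avoids conditional expectations altogether. For the leader-selection (hitting-set) step it (i) replaces the high-probability guarantee by an \emph{expectation} guarantee via the method of alterations, so that \emph{pairwise} independence already suffices; and (ii) observes that pairwise independence is only needed between elements that lie in a \emph{common} set, so a proper coloring of the conflict graph with $\poly(b)$ colors lets the hash family act on colors rather than on the $n$ elements. This shrinks the seed to $O(\log b)$ bits and the seed space to $\poly(b)$, small enough to brute-force: evaluate every seed on the whole instance in parallel, for total work $n\cdot\poly(b)$. Plugging this into the budget accounting --- with the initial budget lowered from $(m/n)^{1/2}$ to $(m/n)^{1/(2c)}$ to absorb the $\poly(b)$ blowup --- keeps the sum over all phases at $\tilde{O}(m)$.

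Two smaller points. First, there are \emph{two} randomized primitives, not one: besides leader selection, the initial budget increase uses a constant-approximate matching on degree-$2$ graphs; the paper derandomizes this separately by the same coloring-then-brute-force idea, getting seed length $O(\log\log n)$. Second, your claim that ``$O(1)$-wise independence suffices'' is not correct for the direct high-probability analysis of leader selection used in the randomized algorithm; it only becomes true after the switch to the alterations-based expectation argument, which is itself one of the paper's contributions.
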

The total computation of our algorithm significantly improves over the $\poly(n)$-bound of Coy and Czumaj~\cite{coyconnectivity2021}, with no loss in the round complexity. In fact, our algorithm matches even the state-of-the-art randomized algorithm~\cite{8948671} in all parameters up to a polylogarithmic factor in the local running time. 

While the connectivity algorithm is of independent interest, our result provides a number of other qualitative advantages. For instance, our analysis relies only on pairwise independence as opposed to the almost $O(\log n)$-wise independence of \cite{coyconnectivity2021}. Moreover, to the best of our knowledge, our result is the first that uses the framework of limited independence for derandomization without incurring a significant loss in one of the parameters (e.g., in the total computation time), and hence may be of practical interest. Furthermore, due to their simplicity, our analyses may serve as a friendly introduction to deterministic algorithms via the framework of bounded independence and, hopefully, as a stepping stone to the more systematic development of computation-efficient derandomization. 

\subsection{Randomized Connectivity Algorithms in a Nutshell}\label{randomizedConnReview}
We present the intuition of the randomized connectivity algorithms by Andoni et al.~\cite{andoni2018parallel} and Behnezhad et al.~\cite{8948671}. For a broader overview of connectivity algorithms, see \Cref{history}.

\customparagraph{Vertex Contraction}
The main idea behind connectivity algorithms working in $\tilde{O}(\log D)$ rounds is to repeatedly perform \textit{vertex contractions}~\cite{andoni2018parallel}. Contracting (often also called \textit{relabeling}) a vertex $u$ to an adjacent vertex $v$ means deleting the edge $\{u,v\}$ and connecting $v$ to all the vertices adjacent to $u$. 
The simplest way to implement this contraction-based approach is to first appoint a random subset of the vertices as \textit{leaders} (by letting each vertex independently with probability $\frac 1 2$ become a leader), and then to contract non-leader vertices to one of their leader neighbors (if any). This approach requires $O(\log n)$ rounds with high probability. 

\customparagraph{Vertex Contraction with Levels and Budgets (Andoni et al.~\cite{andoni2018parallel})}
A crucial observation to speed up the vertex contractions---going back to the graph exponentiation approach by Lenzen and Wattenhofer~\cite{LW10}---is to let each vertex expand its neighborhood to neighbors of neighbors by adding new edges (without changing the connectivity). In fact, if every vertex reaches degree $\Omega(d)$ by expanding its neighborhood in $O(\log D)$ rounds, we can mark vertices to be a leader with probability $\approx \frac{\log n}{d}$. As a result, each non-leader vertex has a leader in its neighborhood and the number of remaining vertices is $\tilde{O}(\frac{n}{d})$.

In their algorithm, Andoni et al.~\cite{andoni2018parallel} assign a level to every vertex which has not been contracted yet. 
Vertices at level $i$ have a budget of $b_i$ for expanding their neighborhood, i.e., each vertex at level $i$ can add at most $b_i$ neighbors. The initial budget $b_0$ is set to $\min(n^{\delta/2}, \sqrt{\frac{m}{n}})$ to maintain global space $O(m)$. At iteration $i$, every vertex either increases its degree to $b_i$ or finds its connected component.
As explained above, we thus can mark leader vertices with probability $\frac{\log n}{b_i}$ and perform contractions to reduce the problem size to $\tilde{O}(n/b_i)$. Hence, the budgets of remaining vertices can be updated to $b_{i+1} = b_i^{1+c}$, for a small constant $c$, while using the same global space. Overall, after $O(\log \log_{\frac m n} n)$ iterations, there will be a unique vertex left in each connected component.

\customparagraph{Random Leader Contraction (Behnezhad et al.~\cite{8948671})}
To further improve the round complexity, Behnezhad et al.~\cite{8948671} design an algorithm that applies vertex contractions and increases the budgets of vertices in an asynchronous manner, e.g., at a given time two active vertices can have different budgets. In each round, their algorithm (informally) ensures that each vertex either learns its $2$-hop neighborhood or increases its budget. We here focus on the routine that defines the budgets' increase, as this is the only step involving randomness. 

Consider the subgraph induced by vertices with budget level $i$. The crucial observation is that if a vertex has $\Omega(b_i)$ many neighbors of the same level, then contracting all of them allows us to recuperate $\Omega(b_i^2)$ budget. If each vertex is elected as a leader with probability $\approx \frac{\log n}{b_i}$, and non-leader vertices contracted to an arbitrary neighboring leader, then leaders can increase their level without exceeding the total memory.
 
\customparagraph{Increasing Initial Budget using Matching (Behnezhad et al.~\cite{8948671})}
To allow each vertex to start with a $\poly \log n$ budget, a randomized constant-round algorithm (see \cite[Algorithm 3]{8948671}) reduces the number of vertices of $G$ by a constant factor. By running it for $O(\log \log n)$ \mpc\ rounds, the problem size decreases from $n$ to $n/\poly \log n$. Intuitively, this algorithm works by contracting a constant fraction of the vertices to their lowest-ID neighbors as follows. Each vertex proposes to be contracted to its neighbor with smallest ID. A deterministic conflict resolving phase results in a graph of size $\Omega(n)$ consisting of \textit{vertex-disjoint paths}. Contracting along the edges of a constant-approximate \textit{maximum matching} in this graph with maximum degree $2$ thus allows to contract $\Omega(n)$ vertices as desired.

\subsection{Deterministic Connectivity: Comparison with the State-of-the-Art}\label{review}
We next present the main ideas behind the recent deterministic connectivity algorithm of Coy and Czumaj~\cite{coyconnectivity2021}. 

Coy and Czumaj~\cite{coyconnectivity2021} identify and extract the only two sources of randomization from the algorithms of \cite{andoni2018parallel, 8948671}, namely matching and hitting set. On the one hand, as outlined in \Cref{randomizedConnReview}, a constant approximation of matching in graphs with maximum degree $2$ can be used for the initial budget increase. On the other hand, the random leader contraction can be formulated as a variant of set cover, which we refer to as hitting set with all sets of the same size (see \Cref{def:HS} for a precise definition).

As these are the only steps involving randomness (as outlined in \Cref{randomizedConnReview}), the (efficient) derandomization of these two constant-round key algorithmic primitives immediately leads to an (efficient) deterministic connectivity algorithm. In fact, their derandomization together with the $O(\log D + \log \log n)$ randomized algorithm due to Behnezhad et al.~\cite{8948671} results in the state-of-the-art deterministic connectivity algorithm in low-memory \mpc\ \cite{coyconnectivity2021}.

Interestingly, because of the conditional lower bound framework (conditioned on the widely believed 1-vs-2-cycles conjecture for low-space \mpc\ algorithms) due to Ghaffari et al.~\cite{GKU19} and its extension to the deterministic setting due to Czumaj et al.~\cite{CDP21lb}, the two underlying problems of matching and hitting set do not admit any \textit{component-stable}\footnote{The notion of component-stability intuitively refers to the property that the choices of any vertex over the course of the algorithm are affected only by vertices in its same connected component.} constant-round deterministic algorithm. 
Hence, the authors in \cite{coyconnectivity2021} incorporate in their work derandomization techniques that are highly non-component-stable.

While their adopted derandomization framework is well-established, its efficient  implementation for obtaining a deterministic connectivity algorithm on an \mpc\ with low local space and optimal global space requires to overcome several challenges. Although the algorithm from \cite{coyconnectivity2021} achieves optimal space guarantees, the computation is suboptimal for both derandomization steps. We refine these to obtain a more efficient deterministic connectivity algorithm, as explained next. 

\customparagraph{Maximum Matching}
In \cite{coyconnectivity2021}, the problem of approximating maximum matching in graphs of maximum degree at most two is solved by searching the space of a randomized process based on pairwise independent hash functions, which are specified by $(2\log n + O(1))$ random bits. As each of the $O(n^2)$ hash functions is evaluated $O(n)$ times, with each evaluation taking $\poly \log n$ time, the resulting total computation is $\tilde{O}(n^3)$. We reduce the \textit{seed length}, i.e., the total number of random bits needed, to $O(\log \log n)$ and, as a result, obtain $\tilde{O}(n)$ total computation.

\customparagraph{Hitting Set}
For a hitting set instance with $n$ elements and a collection of $n$ subsets of size $b$, the algorithm from \cite{coyconnectivity2021} finds a hitting set of size $O(nb^{-1/5})$ by derandomizing a simple random sampling approach based on a $O(\log_b(n))$-wise $1/\poly(n)$-approximately independent family of hash functions of size $\poly(n)$. The distributed implementation of the method of conditional expectation for this process takes global space $O(nb)$ and $\poly(n)$ total computation.

We provide a low-memory \mpc\ algorithm that solves the same hitting set instance using only pairwise independent random choices with $n \cdot \poly(b)$ global space and $n \cdot \poly(b)$ total computation. Thus, the dependency on $n$ improves polynomially when $b \ll n$. It turns out that using this hitting set algorithm as a subroutine in our connectivity algorithm allows us to obtain an algorithm with total computation $\tilde{O}(m)$. 
We also note that several other works \cite{CPS17, GK18, PY18} solve the hitting set problem deterministically in the context of graph spanners in \congest\ and \congested\ using similar derandomization techniques. However, these are not straightforward to implement in the low-memory \mpc\ model.

Finally, it is worth observing that because of the shorter seeds, the \mpc\ implementation of both matching and hitting set algorithms is significantly simplified as we can perform a simple brute force search instead of using the method of conditional expectation.

\pagebreak[3]
\subsection{Further Related Work}\label{history}
The connectivity problem in low-memory \mpc\ was studied by Andoni et al.~\cite{andoni2018parallel} who presented an $O(\log D \cdot \log \log_{\frac m n} n)$ randomized algorithm, which improves upon the classic $O(\log n)$ bound derived from earlier works in the \pram\ model. Concurrently, for graphs with large spectral gap $\lambda$, i.e., $\Omega(1/\poly \log (n))$, the bound was improved in \cite{10.1145/3293611.3331596} developing a randomized $O(\log \log n + \log (1/\lambda))$ algorithm. 
Then, a near-optimal parallel randomized algorithm that in $O(\log D + \log \log_{\frac m n} n)$ rounds determines all connected components was developed by Behnezhad et al.~\cite{8948671}.
Subsequently, Liu et al.~\cite{10.1145/3350755.3400249} extended the same result to the arbitrary \textsf{CRCW \pram} model, which is less computationally powerful than \mpc, achieving such result with good probability\footnote{with success probability at least $1 - 1/ \poly((m \log n)/n)$}. Moreover, by developing a method that converts randomized \pram\ algorithms to highly randomness-efficient \mpc\ algorithms, Charikar et al.~\cite{10.1145/3465084.3467951} achieved a super-polynomial saving in the randomness used in \cite{8948671}, showing that $(\log n)^{O(\log D + \log \log_{m/n} n)}$ random bits suffice (with good probability), provided that the global space is $\Omega((n + m) \cdot n^\delta)$. The current deterministic state-of-the-art algorithm for connectivity is due to Coy and Czumaj \cite{coyconnectivity2021} who obtained a deterministic $O(\log D + \log \log_{\frac m n} n)$ algorithm with asymptotically optimal space. 

Finally, let us note that the connectivity problem has been studied in other regimes as well. Lattanzi et al.~\cite{LMSV11} gave a constant-round \mpc\ connectivity algorithm in the superlinear regime, i.e., each machine has local space $\Omega(n^{1+\delta})$. 
By well-known connections between linear memory \mpc\ and the \congested\ model, \cite{JN18} yields a $O(1)$-rounds randomized connectivity \mpc\ algorithm with optimal global space. 
Then, Nowicki \cite{Now21} showed that the same problem can be solved deterministically in $O(1)$ \mpc\ rounds with the same memory guarantees.

On the hardness side, one of the most outstanding problems for low-space \mpc complexity is the problem of distinguishing whether an input graph is an $n$-vertex cycle or consists of two $\frac n 2$-vertex cycles (see, e.g., \cite{RVJ18,NS22} for more information). Based on the conjectured $\Omega(\log n)$ low-memory \mpc\ round-complexity lower bound for the 1-vs-2-cycles problem, Behnezhad et al.~\cite{8948671} show an $\Omega(\log D)$ lower bound for computing connected components in general graphs with diameter $D \ge \log^{1+\Omega(1)}n$. Coy and Czumaj in \cite{coyconnectivity2021} extend the same conditional lower bound to the entire spectrum of $D$ proving that no connectivity algorithm can achieve $o(\log D)$ \mpc\ round complexity.

\section{Preliminaries}

\subsection{Primitives in Low-Space \mpc}
There are a number of well-known \mpc\ primitives that will be used as black-box tools. These have been studied in the MapReduce framework and can be implemented in the \mpc\ model with stricly sublinear space per machine and linear global space. We will use the following lemma to refer to them:
\begin{lemma}[\cite{GSZ11, G99}]\label{lem:primitives}
For any positive constant $\delta$, sorting, filtering, prefix sum, predecessor, duplicate removal, and colored summation task
\footnote{Given a sequence of $n$ pairs of numbers $\langle color_i,\, x_i \rangle, i \in [n]$, with $C = \{color_i \, | \, i\in [n]\}$, compute $S_c = \sum_{i : color_i = c} x_i$ for all $c \in C$. Note that this problem can be easily solved by a constant sequence of map, shuffle, and reduce steps with $\langle color_i,\, x_i \rangle$ as key-value pairs.} 
on a sequence of $n$ tuples can be performed deterministically in MapReduce (and therefore in the \mpc\ model) in a constant number of rounds using $\memory = n^\delta$ space per machine, $O(n)$ global space, and $\tilde{O}(n)$ total computation.
\end{lemma}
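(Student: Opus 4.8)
The plan is to reduce all six tasks to a single workhorse: \textbf{deterministic comparison sorting} of $n$ constant‑size tuples. Goodrich's optimal BSP sorting algorithm and its MapReduce incarnation~\cite{GSZ11, G99} sort $n$ keys on $p$ machines with $O(n/p)$ memory each in $O(\log n / \log(n/p))$ communication rounds. Instantiating with memory $\memory = n^\delta$ per machine — hence $p = \Theta(n^{1-\delta})$ machines and $O(n)$ global space — the round count becomes $O(\log n / (\delta \log n)) = O(1/\delta) = O(1)$, since $\delta$ is a positive constant, while the total work (local sorts plus comparisons in the merging/routing steps) sums to $O(n \log n) = \tilde{O}(n)$. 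This settles the \textbf{sorting} task; \textbf{filtering} (output, compacted, the sub‑sequence of tuples satisfying a predicate) and \textbf{duplicate removal} follow immediately: sort, mark each tuple by whether it is kept (for duplicate removal, keep a tuple iff it differs from its predecessor in the sorted order), compute the rank of each kept tuple by a prefix sum, and route it to its target position.

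It therefore remains to implement \textbf{prefix sum}, \textbf{predecessor}, and \textbf{colored summation} in $O(1)$ further rounds. For prefix sums over a sequence laid out across the machines I would build a balanced aggregation tree of fan‑out $\memory = n^\delta$ over the $p$ machines; its depth is $O(\log_{n^\delta} p) = O(1/\delta) = O(1)$. In an upward pass each internal node receives and adds the block‑sums of its at most $n^\delta$ children (which fits in local memory), and in a downward pass it forwards to each child the sum of everything to that child's left, from which every machine completes its prefixes locally. \textbf{Predecessor} (given sorted data and a batch of query keys, report for each query the largest data element not exceeding it) reduces to one more sort: tag data items and queries, sort them together by key with ties broken so that queries follow equal data keys, and let each query read off the nearest tagged data item to its left via a one‑sided prefix‑max computed on the same fan‑out‑$n^\delta$ tree. \textbf{Colored summation} is a segmented version of prefix sum: sort the pairs $\langle color_i, x_i\rangle$ by color so that each color occupies a contiguous block, then run the same tree‑based aggregation while resetting the running sum at color boundaries, and write each block's total back to all its members (or to one representative) in $O(1)$ rounds.

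The complexity bookkeeping is routine: each task is $O(1)$ sorts plus $O(1)$ passes over the aggregation tree, hence $O(1)$ rounds; global space stays $O(n)$ because at every step the data is $O(n)$ constant‑size tuples spread evenly over the machines; and total computation is $\tilde{O}(n)$ since sorting costs $O(n\log n)$ and every other pass is linear in the number of tuples. The one point requiring care — and the only place a naive implementation breaks — is \emph{load balancing when many tuples share a key}, e.g.\ colored summation with a single dominant color, where one must never route a whole color class to one machine. Structuring all aggregations by \emph{position in the sorted order} rather than by key value sidesteps this, since the sorting primitive already guarantees an even split of positions, and the segmented reduction then handles a giant monochromatic block without ever concentrating more than $n^\delta$ tuples anywhere. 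I expect this to be the main (though mild) obstacle; everything else is a direct consequence of the sorting black box.
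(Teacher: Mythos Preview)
Your proposal is correct and follows the standard approach of the cited references~\cite{GSZ11,G99}: Goodrich's deterministic BSP sort as the workhorse, plus fan-out-$n^\delta$ aggregation trees of constant depth for prefix sums and their segmented variants. Note, however, that the paper does not actually prove this lemma; it is stated as a black-box result imported from~\cite{GSZ11,G99} and used without proof, so there is no ``paper's own proof'' to compare against beyond observing that your sketch faithfully unpacks what those references do.
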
 
Finally, observe that these basic primitives allow us to perform all of the basic computations on graphs deterministically that we will need in a constant number of \mpc\ rounds. This includes the tasks of computing the degree of every vertex, ensuring neighborhoods of all vertices are stored on contiguous blocks of machines, sums of values among a vertex’ neighborhood, and collecting the 2-hop neighborhoods provided that they fit in the memory of a single machine.

\subsection{Derandomization Framework}\label{sec:derandomization_framework}

In this section, we give an overview of the common derandomization techniques used in all-to-all communication models \cite{CPS17, Lub93} with a focus on deterministic algorithms in the strongly sublinear memory regime of \mpc. A systematic introduction to the framework of limited independence can be found for example in \cite{Rag88, MR95, alon2016probabilistic, LW06, CW79, WC79}.

The first step is to obtain a \textit{randomized process} that produces good results in expectation based on a small search space (i.e., short random seed) by using random variables with some limited independence. 
We will use a $k$-wise independent family of hash functions, which is defined as follows:
\begin{definition}[$k$-wise independence]
    Let $N, k, \ell \in \mathbb{N}$ with $k \le N$. A family of hash functions $\mathcal{H} = \{h : [N] \rightarrow \{0, 1\}^{\ell}\}$ is $k$-wise independent if for all $I \subseteq \{1,\ldots,n\}$ with $|I| \leq k$, the random variables $X_i := h(i)$ with $i \in I$ are independent and uniformly distributed in $\{0, 1\}^\ell$ when $h$ is chosen uniformly at random from $\mathcal{H}$. If $k = 2$ then $\mathcal{H}$ is called pairwise independent. Random variables sampled from a pairwise independent family of hash functions are called pairwise independent random variables.
\end{definition}
The following is a well-known result about the existence and construction of such hash families:
\begin{lemma}[\cite{ABI86, CG89, EGL+98}]\label{lemma-hash}
    For every $N, \ell,k \in \mathbb{N}$, there is a family of $k$-wise independent hash functions $\mathcal{H} = \{h : [N] \rightarrow \{0, 1\}^{\ell}\}$ such that choosing a uniformly random function $h$ from $\mathcal{H}$ takes at most $k(\ell + \log N) + O(1)$ random bits, and evaluating a function from $\mathcal{H}$ takes time $\poly(\ell, \log N)$ time.
\end{lemma}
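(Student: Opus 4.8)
The plan is to recall the classical polynomial construction over a binary field and verify the two quantitative claims. First I would set $s:=\max(\lceil\log N\rceil,\ell)$ and work over the field $\mathbb{F}_q$ with $q=2^{s}$, fixing once and for all an explicit degree-$s$ irreducible polynomial over $\mathbb{F}_2$ so that arithmetic in $\mathbb{F}_q$ is concrete and a bijection $\mathbb{F}_q\cong\{0,1\}^{s}$ is in place. Since $2^{s}\ge N$ I can fix an injection $\iota\colon[N]\hookrightarrow\mathbb{F}_q$, and since $s\ge\ell$ I can define the truncation $\pi\colon\{0,1\}^{s}\to\{0,1\}^{\ell}$ keeping the first $\ell$ bits. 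The family is then
\[
  \mathcal{H}=\bigl\{\, i\mapsto\pi\bigl(\textstyle\sum_{j=0}^{k-1}a_j\,\iota(i)^{j}\bigr)\ :\ (a_0,\dots,a_{k-1})\in\mathbb{F}_q^{k}\,\bigr\},
\]
so that drawing a uniform $h\in\mathcal{H}$ amounts to drawing a uniformly random degree-$(k-1)$ polynomial over $\mathbb{F}_q$, evaluating it at the embedded input, and truncating the output to $\ell$ bits.

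\textbf{Resource bounds.} A seed is exactly the list of $k$ coefficients, i.e.\ $k\cdot s$ bits; using $\ell\ge 1$ one checks $k\max(\lceil\log N\rceil,\ell)\le k(\ell+\log N)$, matching the claimed $k(\ell+\log N)+O(1)$. For evaluation, Horner's rule on a degree-$(k-1)$ polynomial uses $O(k)$ additions and multiplications in $\mathbb{F}_q$ --- addition is a bitwise XOR and multiplication is a polynomial product reduced modulo the fixed irreducible, each costing $\poly(s)$ time --- followed by the projection $\pi$, for a total of $\poly(k,\log N,\ell)$ time; in every application in this paper $k$ is at most polylogarithmic, so this is $\poly(\log N,\ell)$ as stated.

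\textbf{$k$-wise independence.} This is the only step with real content, and I would argue it via a Vandermonde determinant. Fix any $t\le k$ distinct inputs; their images under $\iota$ are $t$ distinct elements $x_1,\dots,x_t$ of $\mathbb{F}_q$, and the $\mathbb{F}_q$-linear map sending a coefficient vector $(a_0,\dots,a_{k-1})$ to the tuple of evaluations of $\sum_j a_j x^{j}$ at these $t$ points has a matrix containing a $t\times t$ Vandermonde submatrix with determinant $\prod_{r<r'}(x_{r'}-x_{r})\ne 0$. Hence this map is surjective with all fibers of the same size $q^{k-t}$, so a uniform coefficient vector produces evaluations that are independent and uniform over $\mathbb{F}_q\cong\{0,1\}^{s}$. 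Applying the coordinatewise truncation $\pi$ then preserves joint independence and carries the uniform distribution on $\{0,1\}^{s}$ to the uniform distribution on $\{0,1\}^{\ell}$, which is precisely the definition of $k$-wise independence for $\mathcal{H}$.

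\textbf{Main obstacle.} There is essentially no obstacle here: this is a classical fact (\cite{ABI86,CG89,EGL+98}), and the only point requiring care is the truncation step --- decoupling the field size $2^{s}$, which must be a power of two that is at least both $N$ and $2^{\ell}$ for the embedding, the explicit arithmetic, and the Vandermonde argument to go through, from the prescribed output length $\ell$ --- together with the routine bookkeeping needed to absorb the rounding in the choice of $s$ into the stated seed-length and evaluation-time bounds.
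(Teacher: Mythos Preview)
The paper does not give a proof of this lemma; it is stated as a known fact with citations to \cite{ABI86,CG89,EGL+98}. Your construction---random degree-$(k-1)$ polynomials over $\mathbb{F}_{2^s}$ with $s=\max(\lceil\log N\rceil,\ell)$, followed by truncation to $\ell$ bits---is exactly the classical one underlying those references, and your verification of the seed length bound and of $k$-wise independence via the Vandermonde determinant is correct.

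The only point worth flagging is the one you already flag yourself: Horner evaluation of a degree-$(k-1)$ polynomial costs $\poly(k,\ell,\log N)$, not $\poly(\ell,\log N)$ as the lemma literally asserts. This is a slight imprecision in the paper's statement rather than a flaw in your argument; every use of the lemma in the paper has $k=2$, so the discrepancy is immaterial, and your remark handles it appropriately.
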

If there is a randomized algorithm, over the choice of a random hash function, that gives good results in expectation, one can derandomize it by finding the right choice of (random) bits. To achieve that, if the seed length is small, one can brute force it without incurring an overhead in the global space. 

In previous works this was usually not possible due to a seed length depending on $n$ of $\Omega(\log n)$ bits, which results in hash families of size larger than the space $\memory$ of a single machine. Instead, they used the method of conditional expectation or probabilities. There, one divides the seed into several parts and fixes one part at a time in a way that does not decrease the conditional expectation (or probability). This can be done with global coordination. We refer the interested reader for more details of the method of conditional expectation to \cite[Section 2.5, Appendix A]{coyconnectivity2021}.

\subsection{Reducing The Seed Length via Coloring}\label{sec:coloring} 
The following technique plays a central role for reducing the seed length of randomized processes solving \textit{local} graph problems. As showed in \cite{BKM20, CDP20ccb, MPC-MIS-det-general}, if the outcome of a vertex depends only on the random choices of its neighbors, then $k$-wise independence among random variables of \textit{adjacent} vertices is sufficient. Whenever this is the case, we can find a mapping from vertex IDs to shorter names (colors) such that adjacent vertices are assigned different names. Linial gave a $1$-round distributed coloring algorithm with $O(\Delta^2 \log(n))$ colors \cite{doi:10.1137/0221015}. 
We here adapt a more explicit $1$-round distributed coloring algorithm with $O(\Delta^2 \log_\Delta^2(n))$ colors by Kuhn \cite{10.1145/1583991.1584032} to the \mpc\ model, which leads to the following lemma:
\begin{lemma}\label{lm:coloring}
    Let $G = (V,E)$ be a graph of maximum degree $\Delta \leq n^{\delta}$. There exists a deterministic algorithm which computes an $O(\Delta^2 \log_{\Delta}^2 n)$ coloring of $G$ in $O(1)$ \mpc\ rounds using $O(n^\delta)$ local space, $O(n \cdot \poly(\Delta))$ global space, and $\tilde{O}(n \cdot \poly(\Delta))$ total computation. 
\end{lemma}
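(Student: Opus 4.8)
The plan is to port Kuhn's polynomial-based one-round color reduction (a refinement of Linial's scheme) onto the constant-round primitives of \Cref{lem:primitives}. After handling the degenerate case $\Delta \le 1$ separately (a matching together with isolated vertices, properly $2$-colorable by orienting each edge toward its lower-ID endpoint), assume $\Delta \ge 2$ and, where needed, read $\log_\Delta n$ as $\max\{2,\log_\Delta n\}$. Pick a prime $q$ with $q = \Theta(\Delta\log_\Delta n)$, concretely a prime in $[2\Delta\lceil\log_\Delta n\rceil,\, 4\Delta\lceil\log_\Delta n\rceil]$ (which exists by Bertrand's postulate and can be found on a single machine by trial division in $\poly(\Delta,\log n)$ time), and set $d := \lceil\log_q n\rceil$, so that $q^d \ge n$ and every vertex ID in $\{0,\dots,n-1\}$, written in base $q$, is the coefficient vector of a polynomial $p_v \in \mathbb{F}_q[x]$ of degree $< d$. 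Since $q \ge \Delta$ we get $d-1 < \log_q n \le \log_\Delta n$, hence $(d-1)\Delta < q$. Now for each edge $\{u,v\}$ the polynomial $p_v - p_u$ is nonzero of degree $< d$, so it vanishes on at most $d-1$ points of $\mathbb{F}_q$; a vertex $v$ therefore has at most $(d-1)\Delta < q$ ``forbidden'' points, so there is a point $i \in \mathbb{F}_q$ with $p_v(i) \ne p_u(i)$ for every neighbor $u$, and $v$ may take the new color $(i, p_v(i)) \in \mathbb{F}_q \times \mathbb{F}_q$. Properness is immediate (two adjacent vertices with the same new color would share $i$ and satisfy $p_v(i)=p_u(i)$), and the palette has size $q^2 = O(\Delta^2\log_\Delta^2 n)$.

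To realize this in \mpc\ I would: (i) compute $\Delta$ and broadcast $q,d$ via \Cref{lem:primitives}; (ii) by a sort/join, attach to each edge tuple the base-$q$ encodings of both endpoints' IDs; (iii) on the machine holding edge $\{u,v\}$, evaluate $p_v - p_u$ at all $q$ points by Horner's rule to find its $\le d-1$ roots, and emit the $\le 2(d-1)$ pairs $(v,i)$ and $(u,i)$ for those roots; (iv) sort the resulting $O(dm)$ pairs by $(\text{vertex},\text{point})$ and, for each vertex, extract the smallest point in $\{0,\dots,q-1\}$ that is absent --- a gap-finding task handled by the predecessor and prefix-sum primitives; (v) each vertex outputs $(i, p_v(i))$. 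Each phase is $O(1)$ rounds; the number of intermediate tuples is $O(dm)$, keeping the global space at $O(n\cdot\poly(\Delta))$; and the dominant work is step (iii), $O\big(m\cdot dq\cdot\poly(\log q)\big) = \tilde{O}(n\Delta^2) = \tilde{O}(n\cdot\poly(\Delta))$.

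I expect the main obstacle to be not the coloring argument --- which is essentially Linial/Kuhn --- but the \mpc\ bookkeeping within the stated budgets: organizing the computation so that neither a $\Theta(q)$-sized object per edge nor a full $\Theta(\Delta)$-neighborhood per vertex is ever materialized on one machine (the reason for the edge-local, then aggregate, structure above), and verifying the simultaneous feasibility of the parameter constraints --- $q$ prime of the right order, $q^d \ge n$, and $(d-1)\Delta < q$ --- together with the small-$\Delta$ corner cases where $\log_\Delta n$ must be read as $\max\{2,\log_\Delta n\}$.
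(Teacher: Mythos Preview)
Your proposal is correct and uses essentially the same argument as the paper --- Kuhn's polynomial-based refinement of Linial: pick a prime $q=\Theta(\Delta\log_\Delta n)$, encode IDs as degree-$<d$ polynomials over $\mathbb{F}_q$, and output a pair $(i,p_v(i))$ at a point $i$ avoiding all $\le (d-1)\Delta<q$ collisions with neighbors.

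The only notable difference is in the \mpc\ realization. You deliberately avoid ``materializing a full $\Theta(\Delta)$-neighborhood per vertex on one machine'' and instead run an edge-local root-finding step followed by a per-vertex aggregation over $O(dm)$ intermediate tuples. The paper takes the simpler route: since the hypothesis is $\Delta\le n^{\delta}$, it just gathers each vertex's neighbor IDs onto the responsible machine (via \Cref{lem:primitives}) and computes $c(i)$ locally by $O(\Delta\cdot q)$ evaluations. This buys two things: the implementation is one line instead of a five-step pipeline, and the global space is cleanly $O(n\Delta)$ rather than $O(dm)=O(n\Delta\log_\Delta n)$, which for constant $\Delta$ is $\Theta(n\log n)$ and so technically overshoots your stated $O(n\cdot\poly(\Delta))$ bound. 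Your edge-local variant would be the right move if $\Delta$ could exceed $n^{\delta}$, but here the assumption makes that machinery unnecessary.
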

\begin{proof}
    We start by recalling the high-level idea and then we give an efficient \mpc\ implementation.
    We assume that each vertex in $G$ is given a unique ID between $1$ and $n$.
    Let $p$ be a prime with $10\Delta \log_\Delta (n) \leq p \leq 20 \Delta \log_\Delta (n)$. It is well known that such a prime always exist.
    Moreover, let $d = \lceil \log_\Delta (n) \rceil$.
    There exists $p^{d+1} \geq n$ distinct polynomials of degree at most $d$ over $\mathbb{F}_p$. We denote by $f_i$ the $i$-th such polynomial.
    Each color corresponds to a tuple over $\mathbb{F}_p$. Note that there are $p^2 = O(\Delta^2 \log_{\Delta}^2 n)$ such tuples.
    
    Let $C_i = \{(x,f_i(x)) \colon x \in \mathbb{F}_p\}$.
    Using $\Delta d < p$ together with the fact that a non-zero polynomial of degree $d$ can have at most $d$ zeros implies that each vertex can choose a color $c(i) \in C_i$ such that $c(i) \notin C_j$ for every neighbor $j$. Now, assigning each vertex $i$ the color $c(i)$ results in a valid coloring.
    It remains to discuss the MPC implementation.
    By using the basic primitives of \cref{lem:primitives} and the assumption that $\Delta \leq n^\delta$, we can assume that the machine responsible to compute the coloring of the $i$-th vertex also stores the IDs of all the neighbors of $i$. Note that a given polynomial can be evaluated in time $\poly(\log n,\Delta)$.
    Computing the color $c(i)$ boils down to $O(\Delta \cdot p^2) = \poly(\log n,\Delta)$ polynomial evaluations. Hence, the total computation time is $\tilde{O}(n \cdot \poly(\Delta))$, as desired.
\end{proof}

\section{Constant Approximation of Maximum Matching}\label{sec:matching}
The first algorithmic step for the derandomization of the connectivity algorithm from~\cite{8948671} consists of solving approximate maximum matching in graphs of maximum degree two. Coy and Czumaj proved the following theorem: 
\begin{theorem}[Theorem 4.2 of \cite{coyconnectivity2021}]\label{coy-amm}
Let $G = (V, E)$ be an undirected simple graph with maximum degree $\Delta \leq 2$. One can deterministically find a matching $\matching$ of $G$ of size at least $m/8 = \Omega(m)$ in $O(1)$ \mpc\ rounds with local space $\memory = O(n^\delta)$, and global space $\memory_{Global} = O(n)$.
\end{theorem}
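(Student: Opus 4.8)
The plan is to instantiate the derandomization framework of \Cref{sec:derandomization_framework}: first exhibit a randomized process, using few random bits, that always outputs a matching and is $\Omega(m)$ in expectation while relying only on pairwise independence; then shrink the seed via the coloring technique of \Cref{sec:coloring} so that the whole space of seeds can be brute-forced in $O(1)$ \mpc\ rounds within the stated budgets. For the process: if $\Delta\le 1$ then $G$ is already a matching and we return $E$, so assume $\Delta=2$ and hence that every component of $G$ is a path or a cycle. Give every vertex $v$ an independent uniform bit $b_v$. Using the vertex IDs for tie-breaking, let each degree-$2$ vertex $v$ with neighbors $u_1,u_2$ (say $\mathrm{ID}(u_1)<\mathrm{ID}(u_2)$) \emph{select} the edge $\{v,u_1\}$ if $b_v=0$ and $\{v,u_2\}$ if $b_v=1$, and let each degree-$1$ vertex select its unique incident edge; put $e\in\matching$ iff both endpoints of $e$ select $e$. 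Since each vertex selects at most one incident edge, $\matching$ is a matching for \emph{every} outcome of the bits. Now fix $e=\{u,v\}$: the event ``$u$ selects $e$'' is determined by $b_u$ alone (it is ``$b_u=0$'', ``$b_u=1$'', or the sure event, depending only on IDs) and has probability $\ge\tfrac12$, and symmetrically for $v$; as these two events depend on disjoint sets of bits, $\Pr[e\in\matching]\ge\tfrac14$, so $\E[|\matching|]\ge m/4\ge m/8$ by linearity. The only probabilistic fact used is that $b_u,b_v$ are independent and uniform for \emph{adjacent} $u,v$.

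Next I reduce the seed. By \Cref{lm:coloring} applied with $\Delta=2$ (after, if needed, a constant-round relabeling of IDs into $[n]$ via sorting), in $O(1)$ rounds, $O(n)$ global space and $\tilde{O}(n)$ total computation we obtain a proper coloring $c\colon V\to[C]$ with $C=O(\log^2 n)$. Take a pairwise independent family $\mathcal H=\{h\colon[C]\to\{0,1\}\}$ as in \Cref{lemma-hash}: a random $h\in\mathcal H$ is specified by $s=2(\log C+1)+O(1)=O(\log\log n)$ bits and evaluates in $\poly(\log\log n)$ time. Define $b_v:=h(c(v))$. Adjacent vertices receive distinct colors, so pairwise independence of $\mathcal H$ makes $b_u,b_v$ independent and uniform whenever $u,v$ are adjacent --- precisely the property the analysis above needs. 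Hence, writing $\matching(h)$ for the (always valid) matching produced from seed $h$, we still have $\E_{h}\big[|\matching(h)|\big]\ge m/8$.

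There are only $S:=2^{s}=\poly\log n$ seeds, so this space can be searched exhaustively in parallel. Using the primitives of \Cref{lem:primitives}, arrange that the machine holding an edge $e=\{u,v\}$ also holds $\mathrm{ID}(u),\mathrm{ID}(v),c(u),c(v)$ and the $O(1)$ bits of neighbor information needed to decide, for any $h$, whether $e\in\matching(h)$. Each machine then computes in one shot its length-$S$ vector of partial counts $\big(\sum_{e\text{ on the machine}}\mathbbm{1}[e\in\matching(h)]\big)_{h\in\mathcal H}$; since $S\le n^\delta$ this vector fits in local memory and costs $\tilde{O}(n^\delta)$ time, hence $\tilde{O}(n)$ in total. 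Summing these $O(n^{1-\delta})$ vectors is a colored-summation task (color $=$ seed) over $O(n)$ values, done in $O(1)$ rounds, $O(n)$ global space and $\tilde{O}(n)$ computation by \Cref{lem:primitives}; this yields $|\matching(h)|$ for all $h$. Pick $h^\star$ maximizing $|\matching(h^\star)|\ge\E_h[|\matching(h)|]\ge m/8$, broadcast $h^\star$, and let each machine re-derive and output the edges of $\matching(h^\star)$ incident to its vertices. In total this is $O(1)$ \mpc\ rounds, $O(n)$ global space, and $\tilde{O}(n)$ total computation.

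The probabilistic core here is essentially trivial, so I expect the real work to lie in the \mpc\ bookkeeping of the last step: one must verify that the coloring of \Cref{lm:coloring}, the redistribution co-locating each edge with the neighbor data needed to evaluate the selection rule, the simultaneous evaluation of all $\poly\log n$ seeds, and the final colored summation all run in $O(1)$ rounds while keeping global space $O(n)$ --- which in turn hinges on the length-$\poly\log n$ per-machine count vectors fitting in the $O(n^\delta)$ local memory. Getting this resource accounting right, rather than proving the expectation bound, is the main obstacle.
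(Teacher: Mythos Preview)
Your proof is correct and in fact already establishes the stronger \Cref{thm:matching} (with the $\tilde{O}(n)$ total-computation bound), but the randomized process you analyze is genuinely different from the one the paper uses. The paper, following Coy--Czumaj, assigns a pairwise-independent variable $X_e$ to each \emph{edge} with $\Pr[X_e=1]=1/4$, puts $e\in\matching$ iff $X_e=1$ and $X_{e'}=0$ for every adjacent edge $e'$, and lower-bounds $\Pr[e\in\matching]\ge p-2p^2=1/8$ by a union bound over the at most two neighboring edges; the seed-length reduction then properly colors the \emph{line graph} (also of maximum degree $2$) with $O(\log^2 n)$ colors and hashes colors rather than edges. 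Your vertex-based selection rule---each vertex picks one incident edge via its bit, and $e\in\matching$ iff both endpoints pick $e$---is arguably cleaner: it yields the tighter constant $m/4$, and you color $G$ itself rather than its line graph. Conceptually, however, both arguments rest on the same two observations: only pairwise independence between \emph{adjacent} objects (edges for the paper, vertices for you) is needed for the expectation bound, and a proper $\poly\log n$-coloring of a bounded-degree graph collapses the seed to $O(\log\log n)$ bits, after which the $\poly\log n$ seeds are brute-forced and aggregated via \Cref{lem:primitives}. Your resource accounting in the final paragraph is accurate and matches the paper's.
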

By extending their algorithm with the seed reduction technique mentioned earlier, we prove the following result:
\begin{theorem}\label{thm:matching}
    There exists an algorithm with the same properties as those in Theorem \ref{coy-amm} using $\tilde{O}(n)$ total computation.
\end{theorem}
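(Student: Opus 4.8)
The plan is to take (a variant of) Coy and Czumaj's randomized process underlying \Cref{coy-amm} and shorten its seed from $2\log n+O(1)$ bits to $O(\log\log n)$ bits, so that the derandomization becomes a brute-force search over only $\poly\log n$ hash functions. Recall that a graph of maximum degree at most two is a disjoint union of paths and cycles. We use the following process: each vertex $v$ draws a single random bit $b_v$; if $\deg(v)=2$ then $v$ proposes to its neighbor ``number $b_v$'', where its two neighbors are ordered canonically by ID, and if $\deg(v)\le 1$ then $v$ proposes to its unique neighbor, if any; the matching $\matching$ consists of all edges $\{u,v\}$ on which $u$ and $v$ propose to each other. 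Since each vertex proposes to at most one neighbor, $\matching$ is a matching, and the key structural feature is that the event $\{u,v\}\in\matching$ is a function of the pair $(b_u,b_v)$ alone: it equals $[\,b_u=\beta_u^{\{u,v\}}\wedge b_v=\beta_v^{\{u,v\}}\,]$ for bits $\beta_u^{\{u,v\}},\beta_v^{\{u,v\}}$ determined deterministically from the local structure of the graph, with the convention that the condition on a degree-$1$ endpoint is vacuous. Hence, if the $b_v$ are pairwise independent and uniform, every edge lies in $\matching$ with probability at least $\tfrac14$, so $\E[|\matching|]\ge m/4\ge m/8$ by linearity of expectation.

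Because the process is local in this strong sense --- each edge's fate depends only on the two endpoint hash values --- we may, following the recoloring idea of \Cref{sec:coloring}, replace vertex IDs by colors of a proper coloring before drawing the bits. First I would invoke \Cref{lm:coloring} with $\Delta\le 2$ to compute, in $O(1)$ rounds with $O(n)$ global space and $\tilde O(n)$ total computation, a proper coloring $c\colon V\to[K]$ with $K=O(\Delta^2\log_\Delta^2 n)=O(\log^2 n)$. We then set $b_v := h(c(v))$ for $h$ drawn from a pairwise independent family $\mathcal H=\{h\colon[K]\to\{0,1\}\}$. Since $c$ is proper, $c(u)\ne c(v)$ on every edge, so $(b_u,b_v)$ is uniform on $\{0,1\}^2$ for each edge and the expectation bound above applies verbatim; thus some $h^\star\in\mathcal H$ satisfies $|\matching(h^\star)|\ge m/8$. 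By \Cref{lemma-hash}, a member of $\mathcal H$ is specified by $2(1+\log K)+O(1)=O(\log\log n)$ bits, so $|\mathcal H|=\poly\log n$.

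It remains to search $\mathcal H$ within the claimed budget, and this is the one step that needs care: naively running the process once per $h$ would cost $\poly\log n$ rounds, and making $\poly\log n$ copies of the graph would inflate the global space to $\tilde O(n)$. Instead I would bucket the edges by \emph{type}: the type of $e=\{u,v\}$ is the tuple $\tau(e)=(c(u),c(v),\beta_u^e,\beta_v^e)$ (in canonical order), of which there are only $O(K^2)=\poly\log n$ possibilities, and the number of edges of each type can be counted in $O(1)$ rounds and $\tilde O(n)$ computation via the colored-summation primitive of \Cref{lem:primitives}. For fixed $h$, an edge's membership in $\matching(h)$ depends only on its type, so $|\matching(h)|=\sum_\tau(\#\text{edges of type }\tau)\cdot[\,h\text{ makes type }\tau\text{ mutual}\,]$; evaluating this over all $h\in\mathcal H$ is a single $\poly\log n$-sized computation that fits on one machine and identifies $h^\star$. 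Finally I would broadcast the $O(\log\log n)$-bit seed of $h^\star$ and let every edge test its own membership in $\matching(h^\star)$, costing $O(1)$ rounds, $O(m)\cdot\poly(\log n)=\tilde O(n)$ computation, and $O(n)$ global space. The main obstacle is not the probabilistic argument --- which uses only pairwise independence and is essentially that of Coy and Czumaj --- but rather checking that the process is local enough for the recoloring to be sound and organizing the brute force via type-bucketing so that the round complexity, global space, and total computation remain $O(1)$, $O(n)$, and $\tilde O(n)$ simultaneously.
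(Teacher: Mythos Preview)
Your proposal is correct and reaches the same conclusion, but along a somewhat different route than the paper. The paper keeps Coy and Czumaj's \emph{edge-based} process: each edge $e$ gets a variable $X_{c(e)}\in\{0,1\}^2$ and $e\in\matching$ iff $X_{c(e)}=0$ and $X_{c(e')}\neq 0$ for all neighboring edges $e'$. Accordingly, the paper colors the \emph{edges} (the line graph also has maximum degree~$2$) with $C=O(\log^2 n)$ colors via \Cref{lm:coloring} and takes a pairwise independent family $\mathcal H\colon[C]\to\{0,1\}^2$ of size $\poly\log n$. You instead design a \emph{vertex-based} proposal process, color the \emph{vertices}, and use $\mathcal H\colon[K]\to\{0,1\}$. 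Both choices are sound; your process even gives the slightly better bound $\E[|\matching|]\ge m/4$.

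The second difference is in the brute-force search. Your worry that ``naively running the process once per $h$'' costs $\poly\log n$ rounds or $\tilde O(n)$ global space is unfounded: the paper simply has each machine, in one round of local computation, evaluate all $|\mathcal H|=\poly\log n$ hash functions on its local edges and keep one counter per $h$; since $|\mathcal H|\ll n^\delta$, the extra storage is $\machines\cdot|\mathcal H|\ll\machines\cdot n^\delta=O(n)$, and a single aggregation yields $|\matching(h)|$ for all $h$ simultaneously. Your type-bucketing alternative is also correct and is arguably tidier --- it reduces the enumeration over $h$ to a $\poly\log n$-sized computation on one machine --- but it is not needed to meet the stated bounds. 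In short, both proofs follow the same derandomization template (short seed via coloring, then exhaustive search); they differ only in which objects are colored and in how the search is organized.
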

We start by reviewing the main idea used in the algorithm proving Theorem \ref{coy-amm}.

    \customsubparagraph{Randomized Algorithm}
    The algorithm of Theorem \ref{coy-amm} is based on derandomizing the following simple random process.
    Let $\{X_e \colon e \in E\}$ be a family of pairwise independent random variables with $X_e = 1$ with probability $p = 1/4$ and $X_e  = 0$ otherwise. 
    Now, let $\matching$ be the matching that includes each edge $e$ with $X_e = 1$ and $X_{e'} = 0$ for every neighboring edge $e'$.
    The expected size of this matching is:
    \begin{align*}
        \E[|\matching|] &= \sum_{e \in E} \Pr[e \in \matching] \ge \sum_{e \in E} \Pr[X_e = 1] - \sum_{\substack{e' \in E \setminus \{e\}\::\:\\ e' \cap e \neq \emptyset}} \Pr[X_e = 1 \cap X_{e'} = 1]\\
        &\ge m \cdot (p - 2p^2) \ge \frac m 8,
    \end{align*} 
    where the second inequality follows from pairwise independence of the random variable. Hence, they can be specified by a seed of length $2\log n + O(1)$ by Lemma~\ref{lemma-hash}. As explained in \cite{coyconnectivity2021}, this allows to use the method of conditional expectation to deterministically find a matching of size at least $m/8$ in $O(1)$ \mpc\ rounds.
    
    \customsubparagraph{Reducing the Seed Length}
    We next show how one can further reduce the seed length to $O(\log \log n)$. The main observation is that the above analysis holds as long as for any two neighboring edges the two corresponding variables are independent. This motivates the following approach.
    First, we assign to each edge $e$ a color $c(e)$ from the set $\{1,2,\ldots,C\}$ for $C = O(\log^2 n)$ by applying Lemma~\ref{lm:coloring} such that two neighboring edges get assigned a different color.
    Let $\{X_c \colon c \in [C]\}$ be a family of pairwise independent random variables with $X_c = 1$ with probability $p = 1/4$ and $X_c = 0$ otherwise. 
    We now include each edge $e$ in $\matching$ if $X_{c(e)} = 1$ and $X_{c(e')} = 0$ for every neighboring edge $e'$. The same calculations as above shows that $\mathbb{E}[\matching] \geq \frac{m}{8}$. 
    
    \customsubparagraph{\mpc\ Algorithm}
    Now we are ready to present our deterministic \mpc\ algorithm that proves Theorem~\ref{thm:matching}. In the following, we say that something can be efficiently computed if there exists a deterministic \mpc\ algorithm running in $O(1)$ rounds with local space $\memory = O(n^\delta)$, global space $\memory_{Global} = O(n)$ and using $\tilde{O}(n)$ total computation. 
    
    Let $\mathcal{H} = \{h \colon [C] \mapsto \{0,1\}^2\}$ be a family of $2$-wise independent hash functions of size at most $2^{2\cdot\log C + O(1)} = \poly(\log n)$ obtained using \cref{lemma-hash}. Observe that each hash function $h \in \mathcal{H}$ defines a matching $\matching(h)$ that includes each edge $e$ with $h(c(e)) = 0$ and $h(c(e')) \neq 0$ for every neighboring edge $e'$, where $h(i)$ denotes the length-$2$ bit sequence assigned to $i$ by the corresponding integer in $\{0,\ldots,3\}$.

    The analysis of the randomized algorithm above implies that choosing a hash function $h$ uniformly at random from  $\mathcal{H}$ results in a matching  of expected size at least $m/8$. In particular, this guarantees the existence of a hash function $h^*$ with $\matching(h^*) \geq m/8$. We efficiently compute $|\matching(h)|$ for every $h \in \mathcal{H}$ and choose one \textit{good} hash function that yields a matching of size at least $m/8$.
    
    First, we efficiently compute the coloring $c$ using \cref{lm:coloring}. Next, we compute the approximate maximum matching in $G$ by derandomizing the sampling approach analyzed above. Since the size of our family of pairwise independent hash functions is $\poly \log n$, we can store one number per hash function on every machine.  Each machine $M_j$, which is responsible for some edges $\mathcal{E}_j \subseteq [E]$, can compute locally the number of edges $\matching^j(h) \subseteq E_j$ in the matching generated by $h \in \mathcal{H}$ within a single round. Then, we efficiently aggregate these numbers across all machines to compute the size of the matching $\matching(h) = \sum_j \matching^j(h)$ for every hash function $h$. The best $h^* \in \mathcal{H}$ for which $\matching(h^*) \ge \frac m 8$, breaking ties arbitrarily, yields our approximate maximum matching. Finally, let us note that the global memory occupied by the hash functions across all machines $\machines$ is $\machines \cdot |\mathcal{H}| \ll \machines \cdot O(n^\delta) = O(n)$ and the overall computation performed to evaluate each hash function for every edge is $|\mathcal{H}| \cdot \poly(\log n) \cdot O(n) = \tilde{O}(n)$.

\section{Computation-Efficient Derandomization of Hitting Set}\label{sc:hs}

In this section, we give a deterministic \mpc\ algorithm for the following hitting set variant defined in \cite{coyconnectivity2021}:
\begin{definition}[\textit{Hitting Set for Leader Election}]\label{def:HS}
     Let $S_1,\ldots,S_n$ be subsets of $[n]$ with $i \in S_i$ and $|S_i| = b$, for each $i \in [n]$. The goal is to find a \textit{(small) hitting set} $\mathcal{L} \subseteq [n]$, that is, a set for which $S_i \cap \mathcal{L} \neq \emptyset$ holds for all $i \in [n]$.
\end{definition}

Coy and Czumaj~\cite{coyconnectivity2021} gave an algorithm with the same parameters as those of the random sampling approach in \cite{8948671}, except that they need large $\poly(n)$ computation. 

\begin{theorem}[Theorem 5.6 of \cite{coyconnectivity2021}]
\label{thm:hitting_set_coy}
    Let $b$ and $n$ be integers with $\log^{10} (n) \leq b \leq n$. One can deterministically find a subset $\mathcal{L} \subseteq [n]$ that solves the Hitting Set for Leader Election problem with $|\mathcal{L}| \leq O(n (\min \{ b,\memory \} )^{-1/5})$ within a constant number of \mpc\ rounds using local space $\memory = O(n^{\delta})$, global space $\memory_{Global}  = O(nb)$, and total computation $\poly(n)$.
\end{theorem}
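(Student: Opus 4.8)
The plan is to derandomize the obvious randomized algorithm: independently place each element of $[n]$ into $\mathcal{L}$ with probability $p$, where $p=\Theta((\min\{b,\memory\})^{-1/5})$. Since a machine holds only $\memory=O(n^\delta)$ words, I would first reduce to the case $b\le\memory$ by replacing each $S_i$ with an arbitrary $\min\{b,\memory\}$-subset of it that still contains $i$; writing $b$ for this truncated size from now on, every $S_i$ then fits on a single machine, which is the source of the $\min\{b,\memory\}$ in the size bound. Because we want a seed short enough to be searched in $O(1)$ rounds, I would draw the membership bits not from a fully independent source but from an $O(\log_b n)$-wise $(1/\poly(n))$-approximately independent hash family $h\colon[n]\to\{0,1\}^{\ell}$ with $\ell=\Theta(\log(1/p))$, so that $\Pr[j\in\mathcal L]=p$ is realized by a fixed set of outputs; for $b\ge\log^{10}n$ such a family has size $\poly(n)$, i.e.\ seed length $O(\log n)$.

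For the analysis I would track the potential
\[
\Phi \;=\; |\mathcal{L}| \;+\; C\sum_{i=1}^{n}\Big(\tfrac{Z_i-\mu}{\mu}\Big)^{k},
\qquad Z_i=|S_i\cap\mathcal L|,\ \ \mu=\E[Z_i]=pb,
\]
with $k=\Theta(\log_b n)$ an even integer and $C$ a large enough constant. The role of the second term is that the count of un-hit sets, $\sum_i\mathbf{1}[Z_i=0]$, is not a low-degree polynomial in the membership bits, and its expectation would need $b$-wise marginals, whereas $\big((Z_i-\mu)/\mu\big)^{k}$ is a degree-$k$ polynomial, is $\ge 1$ whenever $Z_i=0$, and has an expectation (and conditional expectations given any prefix of the seed) that is determined, up to a negligible $\eps\cdot\poly(n)$ error, by the $\le k$-wise marginals the family provides. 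A standard $k$-th-moment estimate for a sum of at most $b$ such $\{0,1\}$ variables gives $\E[(Z_i-\mu)^k]\le(O(k)\mu)^{k/2}+\eps\,b^{k}$, hence $\E[\Phi]\le pn+Cn\big((O(k)/\mu)^{k/2}+n^{-\Omega(1)}\big)$. For $b\ge\log^{10}n$ one checks that $k=\Theta(\log_b n)$ and $p=\Theta(b^{-1/5})$ make $\mu=pb$ large enough that $(O(k)/\mu)^{k/2}\le n^{-2}$ and the approximation term is negligible, so $\E[\Phi]=O(pn)=O(nb^{-1/5})$.

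It then remains to find a seed $h^\ast$ with $\Phi(h^\ast)\le\E[\Phi]$ and to clean up. Since the seed has length $O(\log n)$, I would fix it by the method of conditional expectations in chunks of $\Theta(\delta\log n)$ bits, trying all $n^{O(\delta)}$ continuations of a chunk at once and keeping the one minimizing $\E[\Phi\mid\text{prefix}]$; this takes $O(1/\delta)=O(1)$ rounds. Each such conditional expectation is a sum of $n$ per-set contributions, every one a fixed polynomial in the $\le b$ hash outputs relevant to that set, and these are aggregated using the primitives of \cref{lem:primitives}, with the sets laid out so that each $S_i$ and its relevant seed information sit together. The final $h^\ast$ satisfies $|\mathcal L(h^\ast)|\le\Phi(h^\ast)=O(nb^{-1/5})$ and $\sum_i\big((Z_i-\mu)/\mu\big)^k\le\Phi(h^\ast)/C$; as each un-hit set contributes at least $1$ to that sum, at most $O(nb^{-1/5})$ sets remain un-hit, and adding one arbitrary element (say $i$ itself) of each such set, again via \cref{lem:primitives}, yields a valid hitting set of size $O(nb^{-1/5})$. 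The persistent state is dominated by the $O(nb)$ words storing the truncated sets.

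The main obstacle is the interplay of parameters in, and the \mpc\ cost of, the conditional-expectation step. On the analysis side one must simultaneously (i) make the $k$-th-moment bound strong enough to force $O(nb^{-1/5})$ un-hit sets, which pins $\mu=pb$ to a polynomial in $b$ and hence $k=\Theta(\log_b n)$; (ii) keep the seed $O(\log n)$ so the search is $O(1)$ rounds, which is exactly why an \emph{approximately} $k$-wise family (of size $\poly(n)$) is used rather than an exact one; and (iii) choose the error $\eps=1/\poly(n)$ with a large enough power that $\eps\,b^{k}=\eps\cdot\poly(n)$ is negligible. On the implementation side, computing each conditional expectation naively expands the degree-$k$ surrogate into $b^{\Theta(\log_b n)}=\poly(n)$ monomials per set; this is affordable for the $\poly(n)$ total-computation budget but must be organized so that the \emph{persistent} global space stays $O(nb)$. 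Getting this \mpc\ layout right is the technically delicate part, and it is precisely the inefficiency that the improved hitting-set algorithm of this paper removes.
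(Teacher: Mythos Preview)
Your proposal is correct and follows the same high-level route the paper attributes to Coy and Czumaj: sample each element with probability $b^{-1/5}$ using $O(\log_b n)$-wise $\eps$-approximately independent bits so that the seed has length $O(\log n)$, then derandomize in $O(1)$ \mpc\ rounds via the method of conditional expectations, fixing $\Theta(\delta\log n)$ seed bits per round; the $\poly(n)$ total computation comes from exactly the monomial blow-up you identify.

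The one substantive difference is in what is being tracked. The paper (quoting \cref{thm:coy-start}) records that Coy and Czumaj prove each of the $n+1$ bad events---some $S_i$ un-hit, or $|\mathcal L|>2nb^{-1/5}$---has probability at most $9n^{-3}$, so by a union bound a single seed avoids all of them and no alteration step is used. You instead build a single degree-$k$ potential $\Phi=|\mathcal L|+C\sum_i((Z_i-\mu)/\mu)^k$, show $\E[\Phi]=O(nb^{-1/5})$, and then \emph{repair} the at most $O(nb^{-1/5})$ un-hit sets afterwards. Both are valid; yours is in fact closer in spirit to the present paper's own improvement in \cref{thm:hs}, which pushes the same potential-plus-alteration idea down to $k=2$. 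One caveat on your parameters: the crude error term $\eps\,b^k$ with $k=\Theta(\log_b n)$ forces $\eps=n^{-\Theta(k)}$, considerably smaller than the $\eps=n^{-6}$ quoted in \cref{thm:coy-start}; this is harmless for the $O(\log n)$ seed-length budget, but it signals that Coy and Czumaj's own analysis of the hitting probability does not go through the full $b^k$-term moment expansion.
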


We extend the randomized approach their algorithm relies on by using the method of alterations and reducing the amount of randomness needed to prove the following result:

\begin{theorem}\label{thm:hs}
    There exists an algorithm with the same properties as those in Theorem \ref{thm:hitting_set_coy} with two differences.
    The total computation reduces to $O(n\cdot  \poly(b))$ and the global space increases to $O(n \cdot  \poly(b))$.
\end{theorem}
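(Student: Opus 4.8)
The plan is to mirror the structure of the matching algorithm in Section~\ref{sec:matching}: first isolate the randomized process that underlies the hitting set algorithm of~\cite{coyconnectivity2021}, then cut its seed length down to $O(\log b)$ so that the whole family of hash functions fits on a single machine, and finally replace the method of conditional expectation by a brute-force search over the (now small) seed space. Concretely, I would first recall the randomized sampling process: sample each element $i \in [n]$ independently into a candidate set with probability $q \approx b^{-4/5}$ (or the analogous threshold that produces expected size $O(n b^{-1/5})$), so that each set $S_i$ of size $b$ is hit with probability $\ge 1 - (1-q)^b$; since $b \ge \log^{10} n$ this failure probability is polynomially small, and in expectation only a tiny fraction of the sets go unhit. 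The method of alterations then adds, for each unhit set $S_i$, one arbitrary element of $S_i$ to the hitting set. The key point is that this analysis only needs: (i) pairwise independence of the indicator variables to control $\mathbb{E}[|\mathcal{L}|]$ and the expected number of unhit sets, and (ii) the fact that each set $S_i$ depends on only $b$ of the $n$ variables, so the ``locality'' of each set is $b$.

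Next I would push through the seed-reduction step. The obstacle to a short seed is that a $k$-wise independent family over $[n]$ needs $\Omega(k \log n)$ bits (Lemma~\ref{lemma-hash}), which is too long. To fix this, build the conflict graph $H$ on vertex set $[n]$ where $i \sim j$ iff $i$ and $j$ lie in a common set $S_k$; since each set has size $b$ and there are $n$ sets, $H$ has maximum degree at most $(b-1) \cdot (\text{something})$, and in fact one can bound the relevant degree by $\poly(b)$ after observing that each element lies in at most $\poly(b)$ relevant sets (this is where the global-space blow-up to $O(n\cdot\poly(b))$ comes from — one re-distributes the instance so that each element's ``$2$-hop'' set-neighborhood fits in memory). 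Then apply Lemma~\ref{lm:coloring} to properly color $H$ with $C = \poly(b)$ colors, and sample the inclusion indicators from a pairwise (or $O(1)$-wise) independent family indexed by the $C$ colors rather than by $[n]$. The seed length is then $O(\log C) = O(\log b)$, so $|\mathcal{H}| = \poly(b)$, and since a set $S_i$ of size $b$ has all $b$ of its elements colored distinctly, the indicator variables relevant to a single set are fully independent — this is exactly the property the analysis needs. Actually, we need more care: the analysis requires that within each $S_i$ the $b$ indicators behave like independent samples, and a proper coloring of $H$ guarantees exactly that the $b$ elements of $S_i$ get $b$ distinct colors, hence the $X_{c(i)}$ restricted to $S_i$ are pairwise independent, which suffices for the second-moment bound on the number of unhit sets.

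Having reduced the seed, the \mpc\ implementation is routine and parallels Section~\ref{sec:matching}: compute the coloring via Lemma~\ref{lm:coloring} in $O(1)$ rounds; for each of the $\poly(b)$ hash functions $h \in \mathcal{H}$, each machine locally evaluates $h$ on the colors of the elements it stores, computes its local contribution to $|\mathcal{L}|$ (candidate elements plus, for locally-owned unhit sets, the one added element), and a colored-summation primitive (Lemma~\ref{lem:primitives}) aggregates these into the global value $|\mathcal{L}(h)|$ for every $h$; then pick the best $h^*$, which by the expectation argument satisfies $|\mathcal{L}(h^*)| \le O(n b^{-1/5})$. The min with $\memory$ in the bound $|\mathcal{L}| \le O(n(\min\{b,\memory\})^{-1/5})$ is handled exactly as in~\cite{coyconnectivity2021}, by replacing $b$ with $\min\{b,\memory\}$ when $b$ exceeds the local memory so that each $S_i$'s relevant data still fits on one machine. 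Total computation is $|\mathcal{H}|\cdot \poly(b,\log n)\cdot O(n\cdot\poly(b)) = n\cdot\poly(b)$ and global space is $O(n\cdot\poly(b))$, as claimed. I expect the main obstacle to be the degree bound on the conflict graph $H$ and the accompanying data-redistribution: one must argue that each element participates in only $\poly(b)$ ``relevant'' sets (or otherwise coarsen the instance) so that Lemma~\ref{lm:coloring} applies with $\Delta = \poly(b) \le n^\delta$ and the colored indicators still give every set $b$ distinct colors — once that combinatorial bookkeeping is in place, the probabilistic analysis and the \mpc\ mechanics are both straightforward adaptations of what has already appeared.
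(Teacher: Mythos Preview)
Your proposal is essentially the paper's approach: pairwise independence plus alterations, then a conflict-graph coloring to shrink the seed to $O(\log b)$, then brute-force over the $\poly(b)$ hash functions. The one piece you flag as the main obstacle---bounding the degree of the conflict graph so that \cref{lm:coloring} applies with $\Delta = \poly(b)$---is resolved in the paper by a simple counting preprocessing: the $n$ sets contribute $nb$ element-memberships in total, so at most $n/b$ elements belong to more than $b^2$ sets; add all of those directly to $\mathcal{L}$ (this costs $\le n/b \le n b^{-1/5}$) and remove the sets they cover. After this, every remaining element lies in at most $b^2$ sets, hence the conflict graph has maximum degree at most $b^2\cdot b = b^3$, and the coloring with $C = O(b^6\log^2 n) = \poly(b)$ colors goes through. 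The global-space blow-up to $O(n\cdot\poly(b))$ comes not from data redistribution but from materializing the $\binom{b}{2}$ conflict edges per set and from evaluating all $\poly(b)$ hash functions; everything else in your write-up matches the paper.
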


We will show in \cref{sec:blackbox} that the algorithm from Theorem~\ref{thm:hs} together with minor changes to the parameters of the connectivity algorithm results in a deterministic connectivity \mpc\ algorithm with near-linear total computation.

\customsubsubsection{Review of Hitting Set Algorithm of Coy and Czumaj} Consider adding each element to $\mathcal{L}$ with probability $p = b^{-1/5}$. Assuming full independence, the assumption $b \geq \log^{10}(n)$ together with a simple Chernoff Bound implies that $\mathcal{L}$ is a hitting set with high probability. The high probability bound still holds with $O(\log_b n)$-wise independence, but  fails to hold with $o(\log_b n)$-wise independence. As $n$ $k$-wise independent random variables require a seed length of $\Omega(k \log n)$, using $O(\log_b n)$-wise independence would not result in a seed length of $O(\log n)$, which is necessary for an $O(1)$ \mpc\ round derandomization based on the method of conditional expectation. To shorten the seed length, the authors of \cite{coyconnectivity2021} use so-called $k$-wise $\eps$-approximately independent random variables for $k = 15 \log_b(n)$ and $\eps = n^{-6}$. In particular, the starting point of their algorithm is the following theorem.

\begin{theorem}[{\cite[Theorem 5.2]{coyconnectivity2021}}]\label{thm:coy-start}
Let $\log^{10}(n) \leq b \leq n$, $k$ be even with $k = 15 \log_b(n) \geq 4, \eps = n^{-6}$, and $p = b^{-1/5}$. Then, if $X_1,X_2, \ldots, X_n$ are $k$-wise $\eps$-approximately independent random variables with $X_i = 1$ with probability $b^{-1/5}$ and $X_i = 0$ otherwise. Then each of the following $n+1$ events hold with probability at least $1 - 9n^{-3}$: 

\begin{bracketenumerate}
    \item $\sum_{j \in S_i} X_j > 0$ for every $1 \leq i \leq n$, and
    \item $\sum_{i=1}^n X_i \leq 2nb^{-\frac{1}{5}}$. 
\end{bracketenumerate}
\end{theorem}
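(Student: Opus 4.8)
The plan is to derive all $n+1$ claims (the $n$ statements ``$\sum_{j\in S_i}X_j>0$'' and the single statement ``$\sum_i X_i\le 2nb^{-1/5}$'') from one tail inequality for sums of $k$-wise $\eps$-approximately independent $0/1$ variables, proved by the $k$-th moment method in the style of Bellare--Rompel / Schmidt--Siegel--Srinivasan. For $T\subseteq[n]$ put $Y_T=\sum_{j\in T}X_j$ and $\mu_T=\E[Y_T]$; since every marginal $\Pr[X_j=1]$ is within $\eps=n^{-6}$ of $p=b^{-1/5}$, we have $\mu_T=|T|p\pm|T|\eps$. I would first prove the ``master bound'': for even $k\le\mu_T$,
\[
\Pr\!\left[\,|Y_T-\mu_T|\ge\tfrac12\mu_T\,\right]\;\le\;O\!\left(\frac{k}{\mu_T}\right)^{k/2}+E,\qquad E\le k^{O(k)}\,|T|^{k/2}\,\mu_T^{-k}\,\eps .
\]
This follows by expanding $\E[(Y_T-\mu_T)^k]$ into monomials in the centered variables $X_j-\Pr[X_j=1]$: a monomial has a nonzero value under true independence only if every variable in it occurs with multiplicity at least $2$, so at most $k/2$ distinct variables appear (this yields the $|T|^{k/2}$ count; monomials containing a singleton variable are handled by the same argument with $|T|^{k/2}$ replaced by $|T|^{(k+1)/2}$); applying $k$-wise $\eps$-approximate independence to each of the $\le 2^k$ sub-products of each monomial gives the additive error $E$; and the main term is the usual bound $O(k\mu_T+k^2)^{k/2}=O(k\mu_T)^{k/2}$ on the $k$-th central moment of a sum of independent $0/1$ variables, valid since $\mu_T\ge k$. (One may instead first replace the distribution by a genuinely $k$-wise independent one at statistical distance $2^{O(k)}\eps$ and quote the exact bound; either way the argument is standard.)

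Item~(a) is the master bound with $T=S_i$, $|T|=b$, $\mu_T=\Theta(b^{4/5})$: since $b\ge\log^{10}n$ and $k=15\log_b n=O(\log n/\log\log n)$ we have $\mu_T\ge k$, and $\{Y_{S_i}=0\}\subseteq\{|Y_{S_i}-\mu_{S_i}|\ge\tfrac12\mu_{S_i}\}$. Taking logarithms of $(k/\mu_T)^{k/2}$, the dominant contribution $\tfrac k2\cdot\tfrac45\log b$ equals \emph{exactly} $6\log n$ (using $\log_b n\cdot\log b=\log n$), while $\tfrac k2\log k=O(\log n)$ is smaller, so the main term is $\le n^{-4}$; and in $E$ the factor $|T|^{k/2}\mu_T^{-k}$ is at most $b^{-3k/10}=n^{-4.5}$, which together with $\eps=n^{-6}$ and $k^{O(k)}=n^{O(1)}$ keeps $E\le n^{-3}$. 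Item~(b) is the master bound with $T=[n]$, $\mu_T=\Theta(nb^{-1/5})\ge\Theta(n^{4/5})\ge k$, and $\{Y_{[n]}>2nb^{-1/5}\}\subseteq\{|Y_{[n]}-\mu_{[n]}|\ge\tfrac12\mu_{[n]}\}$ once $n$ is large enough for the $n\eps$ slack in $\mu_{[n]}$ to be negligible; the identical computation uses $b\le n$ to get $\tfrac k2\log(\mu_T/O(k))\ge(6-o(1))\log n$, and $|T|^{k/2}\mu_T^{-k}\le n^{-3k/10}$ again controls $E$. Thus each of the $n+1$ events fails with probability at most $n^{-4}+n^{-3}\le 9n^{-3}$ (the constant $9$ and the exponent $3$ are deliberately loose).

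The one genuinely delicate point is the bookkeeping of $E$: the naive bound replacing $|T|^{k/2}$ by $|T|^{k}$ is useless for item~(b) (it would give $n^{k/5}\eps$, which is $\gg 1$), so one must really use that only ``fully paired'' monomials survive under independence. Once this observation is in place, everything else is substitution of the parameters $\eps=n^{-6}$, $p=b^{-1/5}$, $k=15\log_b n$, together with $\log^{10}n\le b\le n$, and checking that the hidden constants in $k^{O(k)}$ are absorbed by the $n^{-6}$ and by the generous final exponent.
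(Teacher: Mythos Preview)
The present paper does not prove this theorem; it is quoted from Coy and Czumaj as background (note the citation in the theorem header) and the paper immediately moves on to its own, different pairwise-independence approach. So there is no proof here to compare against.

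Your overall strategy---the $k$-th moment/Markov bound \`a la Bellare--Rompel and Schmidt--Siegel--Srinivasan, applied with $T=S_i$ for (a) and $T=[n]$ for (b)---is the standard one and almost certainly what Coy and Czumaj do; the main-term calculation is fine. The gap is in your error term $E$. You claim $E\le k^{O(k)}|T|^{k/2}\mu_T^{-k}\eps$ on the grounds that ``only fully paired monomials survive'', but that survival argument bounds the \emph{main term under true independence}, not the discrepancy incurred when passing from $k$-wise $\eps$-approximate to genuine $k$-wise independence. Under approximate independence, \emph{every} one of the $|T|^k$ ordered monomials $\prod_j Z_{i_j}$ picks up an additive error of order $\eps$ (or $2^{O(k)}\eps$, depending on the definition), including those with a singleton index---those are the monomials whose \emph{independent} expectation vanishes, not whose \emph{approximate} expectation vanishes. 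Your parenthetical about $|T|^{(k+1)/2}$ does not help: one singleton together with $k-1$ further (possibly all distinct) indices already accounts for $\Theta(|T|^k)$ tuples. The alternative you float---replace by a truly $k$-wise independent distribution at small total-variation distance---does not follow from closeness of $k$-marginals alone.

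With the honest bound $E\le(|T|/\mu_T)^k\cdot 2^{O(k)}\eps$ one gets, in both items, $(|T|/\mu_T)^k=b^{k/5}=n^3$ and hence $E=2^{O(k)}n^{-3}$. Whether this lands under the stated $9n^{-3}$ now hinges on the precise definition of $\eps$-approximate independence (pointwise vs.\ statistical distance on $k$-marginals) and on whether a $2^{O(k)}$ factor is really present; since $k$ can be as large as $\Theta(\log n/\log\log n)$, you cannot absorb it by ``$k^{O(k)}=n^{O(1)}$''. So the approach is right, but the bookkeeping of $E$ must be redone from the actual definition rather than from the paired-monomial count.
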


Next, we explain our randomized approach, which bears some similarities with that of \ref{thm:coy-start}, and proceed to the reduction of its seed length and its deterministic implementation on an \mpc\ with strongly sublinear memory.

\customsubsubsection{Pairwise Analysis}

As a first step, we show that a minor modification to their randomized hitting set algorithm results in a hitting set of expected size at most $2nb^{-1/5}$, assuming only pairwise independence. As before, each element joins $\mathcal{L}$ with probability $p = b^{-1/5}$. In expectation, $b \cdot p = b^{4/5}$ elements are sampled from each set. Using only pairwise independence and Chebyshev's inequality, this implies that a set is bad, i.e., no element is sampled from it, with probability at most $\frac{1}{b^{4/5}}$. This directly follows from the following lemma:

 \begin{lemma}
    Let $X_1,\ldots,X_n$ be pairwise independent random variables taking values in $[0,1]$. Let $X = X_1 + \ldots + X_n$ and $\mu = \E[X]$. Then $\Var[X] = \sum_{i=1}^n \Var[X_i] \le \mu$ and
    \begin{equation*}
        \Pr\left[|X - \mu| \ge \mu\right] \le \frac{\sum_{i=1}^n \Var[X_i]}{\mu^2} \le \frac{1}{\mu}.
    \end{equation*}
\end{lemma}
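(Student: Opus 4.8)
The plan is to prove this by the standard route: first establish the variance identity using pairwise independence, then bound each individual variance using the $[0,1]$ constraint, and finally invoke Chebyshev's inequality. Throughout I assume $\mu > 0$, since otherwise every $X_i$ is almost surely $0$ and the claim is vacuous.

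For the variance identity, I would expand $\Var[X] = \sum_{i=1}^n \Var[X_i] + \sum_{i \neq j} \Cov[X_i, X_j]$. Pairwise independence implies $\E[X_i X_j] = \E[X_i]\,\E[X_j]$ for $i \neq j$, so every covariance term vanishes and $\Var[X] = \sum_{i=1}^n \Var[X_i]$. (Only pairwise uncorrelatedness is actually used here.)

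Next I would bound $\sum_i \Var[X_i] \le \mu$. Since $X_i$ takes values in $[0,1]$, we have $X_i^2 \le X_i$ pointwise, hence $\E[X_i^2] \le \E[X_i]$, and therefore $\Var[X_i] = \E[X_i^2] - (\E[X_i])^2 \le \E[X_i^2] \le \E[X_i]$. Summing over $i$ and using linearity of expectation gives $\sum_{i=1}^n \Var[X_i] \le \sum_{i=1}^n \E[X_i] = \E[X] = \mu$, which is the first claimed inequality.

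Finally, Chebyshev's inequality applied to $X$ with threshold $\mu$ yields $\Pr[|X-\mu| \ge \mu] \le \Var[X]/\mu^2$. Substituting $\Var[X] = \sum_{i=1}^n \Var[X_i]$ gives the middle expression, and the bound $\sum_{i=1}^n \Var[X_i] \le \mu$ from the previous step gives $\Var[X]/\mu^2 \le \mu/\mu^2 = 1/\mu$, completing the proof. There is no real obstacle here: the only point worth a moment's care is the $X_i^2 \le X_i$ step, which is exactly where the hypothesis that the variables are supported in $[0,1]$ (rather than, say, $\{0,1\}$, which would give the same bound with equality) is used.
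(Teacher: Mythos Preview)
Your proof is correct and is exactly the standard derivation. The paper does not actually prove this lemma; it states it as a known fact (pairwise independence gives additive variance, the $[0,1]$ constraint yields $\Var[X_i]\le \E[X_i]$, and Chebyshev finishes), so your argument fills in precisely what the paper omits.
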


Hence, by adding for each unhit set an arbitrary element to $\mathcal{L}$, at most $n/b^{4/5}$ additional elements are added to $\mathcal{L}$ in expectation, resulting in a hitting set of expected size at most $n(b^{-1/5} + b^{-4/5})$.

\customsubsubsection{Reducing The Seed Length}

From the pairwise analysis above, we directly get a seed length of $O(\log n)$. Next, we show how to reduce the seed length to $O(\log b)$, which allows for a simple brute-force search. 
We again employ a coloring idea, which is based on the simple observation that we only require pairwise independence between elements contained in the same set. 
Hence, the goal is to color the elements with $\poly(b)$ colors such that all elements in a given set $S_i$ are colored with a different color.

In general, this may not be possible as there might exist elements which are contained in a lot of sets. Fortunately, a simple calculation shows that there exist at most $n/b$ elements which are contained in more than $b^2$ different sets. Hence, by directly adding these elements to $\mathcal{L}$, we can assume ``for free'' that each element is contained in at most $b^2$ sets, which we will do from now on.

We can then obtain a coloring with the desired properties by finding a proper coloring in the graph $G_{conflict}$, defined as follows. 
The vertex set consists of one vertex for each of the $n$ elements. Moreover, two elements are connected by an edge if there exists a set which contains both elements.
Note that the maximum degree $\Delta_{conflict}$ of $G_{conflict}$ is upper bounded by $b^3$. This follows from our assumption that each element is contained in at most $b^2$ sets.
Therefore, we can efficiently color $G_{conflict}$ with $C = O(\Delta^2_{conflict} \log^2(n)) = O(b^6 \log^2 n)$ colors.
For each $i \in [n]$, let $c(i)$ denote the color assigned to the $i$-th element.
Note that it directly follows from the definition of $G_{conflict}$ that all elements in a given set are assigned a different color.

We are now ready to present our randomized process that produces a hitting set with the desired properties.
Let $\{X_c \colon c \in [C]\}$ be a family of pairwise independent random variables with $X_c = 1$ with probability $p = b^{-1/5}$ and $X_c = 0$ otherwise. 
For simplicity, we assume that $1/p$ is a power of $2$, i.e., there exists $\ell \in \mathbb{N}$ with $2^{\ell} = b^{1/5}$. 
According to \cref{lemma-hash}, we can generate these random variables with a seed of length $2(\ell + \log C) + O(1) = O(\log b)$. Now, we add each element $i$ with $X_{c(i)} = 1$ to $\mathcal{L}$. Then, for each set $S_i$ with $\sum_{j \in S_i} X_{c(j)} = 0$, we add the element $i \in S_i$ to $\mathcal{L}$.
By the analysis and discussion above, $\mathcal{L}$ is a hitting set of expected size $O(nb^{-1/5})$.

\customsubsubsection{\mpc\ Algorithm}
It remains to discuss the \mpc\ implementation, which will prove \cref{thm:hs}. 
In the following, we say that something can be efficiently computed if there exists a deterministic \mpc\ algorithm running in $O(1)$ rounds with local space $\space = O(n^\delta)$, global space $O(n \poly(b))$, and using $O(n \poly(b))$ total computation.

In the preprocessing step, we add all elements which are contained in at least $b^2$ sets to the hitting set and remove all sets which contain at least one such element from consideration.
The preprocessing step requires us to compute for each element in how many sets it is contained in.
This can be done efficiently by using the colored summation primitive.

Next, we explain how to efficiently construct the graph $G_{conflict}$.
We generate the edges of $G_{conflict}$ in two steps. First, each set $S = \{e_1,e_2,\ldots,e_b\}$ creates $\binom{b}{2}$ entries $\{\{e_i,e_j\} \colon i \neq j \in [b]\}$.
This can easily be done with $\poly(b)$ global space per set and $\min(\memory,\poly(b))$ local space in $O(1)$ rounds by using the primitives of Lemma \ref{lem:primitives}.
Hence, we can efficiently generate all these edges in parallel.
Afterwards, we use the duplicate removal procedure of Lemma \ref{lem:primitives} to remove duplicate edges.

As $G_{conflict}$ has maximum degree $b^3$, we can use \cref{lm:coloring} to efficiently compute a coloring of $G_{conflict}$ with $C = O(b^6 \log^2 n) = \poly(b)$ colors.
As before, we denote with $c(i)$ the color assigned to the $i$-th element.
For $\ell := \log_2(b^{1/5})$, let $\mathcal{H} = \{h \colon [C] \mapsto \{0,1\}^\ell\}$ be a family of $2$-wise independent hash functions of size at most $2^{2(\ell + \log C) + O(1)} = \poly(b)$ such that evaluating a function from $\mathcal{H}$ takes time $\poly(\ell, \log C) = \poly(\log b)$ time. \cref{lemma-hash} guarantees the existence of such a family.

For each function $h \in \mathcal{H}$, we define a hitting set $\mathcal{L}_h$ as follows.
First, each element $i$ with $h(c(i)) = 0$ is contained in $\mathcal{L}_h$, where $h(c(i))$ denotes the length-$\ell$ bit sequence for $c(i)$ by the corresponding integer in $\{0,\ldots,\ell-1\}$. Moreover, if for a given set $S_i$ no element contained in it was added in the first step, then we add element $i$ to $\mathcal{L}_h$.
The discussion above implies that there exists at least one hash function $h \in \mathcal{H}$ with $|\mathcal{L}_h| = O(n b^{-1/5})$. 
Using \cref{lem:primitives}, it is easy to see that for a single hash function $h \in \mathcal{H}$, we can efficiently compute $\mathcal{L}_h$ and its size. As $\mathcal{H}$ only contains $\poly(b)$ hash functions, this implies that we can efficiently compute $\mathcal{L}_h$ for every $h \in \mathcal{H}$. After we have done this, we can output the hitting set $\mathcal{L}_{h^*}$ of smallest size. As remarked above, $\mathcal{L}_{h^*}$ has size $O(n b^{-1/5})$, which finishes the proof.

\section{Connectivity Algorithm}\label{sec:blackbox}

In this section, we discuss the necessary changes to the randomized connectivity algorithm of Behnezhad et al.~\cite{8948671} and its analysis in order to prove the main result of this paper.

The deterministic approximate matching from Section~\ref{sec:matching} is used to replace steps $5$ and $6$ of Algorithm 2 of \cite{8948671}. The same modification was already done by \cite{coyconnectivity2021} and they showed that the total number of vertices drop by a constant factor, assuming that no isolated vertex exists. Hence, by applying this modified algorithm $O(\log \log_{\frac m n} n)$ times, one can in $O(\log \log_{\frac m n} n)$ rounds ensure that $m \ge n \log^{C} n$, for a given constant $C$. All the steps of the modified deterministic algorithm can be implemented by invoking the primitives of \cref{lem:primitives} $O(1)$ times, which in particular ensures that the algorithm can be implemented with total computation $\tilde{O}(m)$. Hence, we can from now on assume that $m \ge n \log^{C} n$, for a given constant $C$. 
It remains to prove that Algorithm 1 of \cite{8948671} can be implemented deterministically with the same asymptotic complexity and using $\tilde{O}(m)$ total computation, assuming $m \geq n \log^C (n)$ for a sufficiently large constant $C$. To this end, Coy and Czumaj proved the following lemma: 

\begin{lemma}[{\cite[Lemma 6.3]{coyconnectivity2021}}]\label{lem:coy-blackbox1}
    Let $S_i$ denote the set of saturated vertices at level $i$ after Step 2 of the \textsc{RelabelIntraLevel} routine in \cite{8948671}, let $L_i$ denote the set of selected leaders at level $i$ after Step 3 of the same execution of \textsc{RelabelIntraLevel}, let $\beta_i$ denote the budget of vertices at level $i$, let $b(v)$ denote the budget of vertex $v$, and let $\gamma, \eps$ be arbitrary constants such that $0 < \gamma, \eps < 1$. If we make the following modifications to \textsc{RelabelIntraLevel}:
\begin{itemize}
    \item set $\beta_{i+1} \coloneqq \beta_i \cdot (\min\{\beta_i, n^\eps\})^{\gamma/4}$,
    \item replace Step 3 of \textsc{RelabelIntraLevel} with any \mpc\ algorithm that in $O(1)$ rounds selects $O\left(\frac{|S_i|}{(\min\{\beta_i, n^\eps\})^{\gamma}}\right)$ leaders for each level $i$ with high probability or deterministically, and
    \item replace the budget update rule in Step 4 of \textsc{RelabelIntraLevel} with \[b(v) \coloneqq b(v)\cdot (\min\{b(v), n^\eps\})^{\gamma/4},\]
\end{itemize}
then the connectivity algorithm of \cite{8948671} remains correct with the same asymptotic local and global space complexity.
\end{lemma}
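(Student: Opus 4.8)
The plan is to reopen the analysis of the \textsc{RelabelIntraLevel} routine and the outer connectivity algorithm of Behnezhad et al.~\cite{8948671} (in the form used by Coy and Czumaj~\cite{coyconnectivity2021}) and verify that each of the three modifications preserves the three properties that analysis establishes: (i) correctness --- connected components are preserved and every saturated vertex is relabeled; (ii) the local and global space bounds; and (iii) the round complexity $O(\log D + \log\log_{\frac m n} n)$. Since modifications (1) and (3) only decrease the rate at which budgets grow and modification (2) only weakens the guarantee on the number of leaders, the structural part of the original correctness argument carries over verbatim; the real work is to re-check the space accounting and the round count with the new, gentler growth rate.

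\textbf{Correctness.} The only role of Step~3 is to produce a leader set $L_i$ that \emph{hits}, for every saturated vertex $v \in S_i$, a fixed size-$\beta_i$ subset of the same-level neighborhood of $v$; given this, Step~4 relabels every saturated vertex to a leader neighbor, and relabeling contracts only actual edges, so the connected components are unchanged. Thus any $O(1)$-round \mpc\ algorithm returning a valid hitting set of the stated size suffices --- in particular \Cref{thm:hs} applied with set size $b = \min\{\beta_i, n^\eps\}$ and exponent $\gamma = 1/5$, since restricting each neighborhood to $\min\{\beta_i, n^\eps\}$ elements only makes the instance harder to leave unhit. Hence correctness is identical to the original.

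\textbf{Space.} Behnezhad's asynchronous scheme maintains the invariant that the total budget $\sum_v b(v)$ over active vertices is $O(m)$ and that a vertex with budget $b(v)$ occupies $O(b(v))$ local space; it suffices to check that no budget-increase event violates this. For the level-wise rule, the $\Theta(|S_i|\,\beta_i)$ budget held by the saturated level-$i$ vertices is replaced, after collapsing to $O(|S_i| \cdot (\min\{\beta_i,n^\eps\})^{-\gamma})$ leaders of budget $\beta_i\cdot(\min\{\beta_i,n^\eps\})^{\gamma/4}$, by $O(|S_i|\,\beta_i\,(\min\{\beta_i,n^\eps\})^{-3\gamma/4}) = O(|S_i|\,\beta_i)$, so $\sum_v b(v)$ does not increase; the same cancellation (exponent $\gamma/4 - \gamma < 0$) handles the in-place rule $b(v) := b(v)\cdot(\min\{b(v),n^\eps\})^{\gamma/4}$, whose required increment $b(v)^{1+\gamma/4}-b(v)$ is dominated by the $\Omega(b(v))$-per-neighbor space freed by the $\Omega(b(v))$ neighbors collapsed into $v$ --- the original argument supplied a factor $\Theta(b(v))$, and we now only ask for $(\min\{b(v),n^\eps\})^{\gamma/4} \le b(v)$. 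Local space is bounded exactly as before, since budgets do not exceed the per-machine cap until the surviving graph is small enough to fit on one machine.

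\textbf{Rounds and the main obstacle.} While $\beta_i < n^\eps$ the budget obeys $\beta_{i+1} = \beta_i^{1+\gamma/4}$, so, starting from $\beta_0 = \poly\log n$ (guaranteed by the preprocessing of \Cref{sec:blackbox}), reaching $\beta_i = n^{\Omega(1)}$ takes $O(\log(\eps\log n/\log\beta_0)) = O(\log\log_{\frac m n} n)$ steps; once $\beta_i \ge n^\eps$ the rule becomes the geometric $\beta_{i+1} = \beta_i\cdot n^{\eps\gamma/4}$, which needs a further $O(1/(\eps\gamma)) = O(1)$ steps, and the $O(\log D)$ term is untouched, as it stems from the $2$-hop-learning / inter-level part of the algorithm, which the three modifications leave alone. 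The step that needs the most care, and where I expect the subtleties to lie, is re-verifying the amortized (potential-function) space analysis of the \emph{asynchronous} budget increases: one must confirm that the two invariants ``every machine holds $O(n^\delta)$ data'' and ``$\sum_v b(v) = O(m)$'' continue to hold \emph{simultaneously} at every point in time under the slower growth, paying particular attention to the boundary regime $b(v) \approx n^\eps$ (where the growth law switches) and to the final rounds in which components contract below machine size. Everything else is a line-by-line transcription of the Behnezhad--Coy--Czumaj analysis with the new growth rate substituted.
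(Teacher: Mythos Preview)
The paper does not actually prove this lemma: it is quoted verbatim from \cite{coyconnectivity2021} and used as a black box. What the paper \emph{does} prove is the closely related \cref{lem:blackbox1}, which replaces the exponent $\gamma/4$ by $\gamma/(4c)$ and additionally tracks total computation; so the natural comparison is between your proposal and that proof.

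Your approach and the paper's proof of \cref{lem:blackbox1} are essentially the same. Both re-open the analysis of \cite{8948671} and re-verify the key invariants under the modified update rules. The paper organizes the verification around three specific items from \cite{8948671}: (a) the level of any vertex is $O(\log\log_{m/n} n)$ (Lemma~15 there), (b) the global space stays $O(m)$ (Lemma~17), and (c) the sum of squared budgets stays $O(m)$ (Lemma~21). The core computation for (b) is exactly your cancellation: from $|L_i| = O(|S_i|\,(\min\{\beta_i,n^\eps\})^{-\gamma})$ and $\beta_{i+1} = \beta_i\,(\min\{\beta_i,n^\eps\})^{\gamma/4}$ one gets $\beta_{i+1}\,n_{i+1} \le \beta_i\,n_i$ (the paper does the $c$-th-power version $\beta_{i+1}^c n_{i+1} \le \beta_i^c n_i$), and then sums over the $O(\log\log n)$ levels. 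Your ``rounds'' paragraph is their item (a), and your ``space'' paragraph is their item (b). The one invariant you do not mention explicitly is (c), the sum-of-squares bound needed for the $2$-hop expansion step; the paper simply notes it follows by the same line of reasoning as (b), which is also true under your parameters. Otherwise the two write-ups match.
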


We extend the above lemma to make it work with the deterministic hitting set from Section \ref{sc:hs} by proving the following slight modification of it. The main technical challenge will be to ensure that our deterministic hitting set algorithm, which adds a polynomial factor (in $b$) increase in the memory and computation required, can still be run in parallel with linear global space and total computation.

\begin{lemma}\label{lem:blackbox1}
    Let $c \ge 3$ be the smallest integer such that both the global space and the total computation required by the algorithm from \cref{thm:hs} are bounded by $n \cdot b^c$, and let $\eps = \delta / c$ so that $n^{c\cdot \eps} \le n^{\delta}$. The same result as that of Lemma~\ref{lem:coy-blackbox1} can be achieved with the following modifications to \textsc{RelabelIntraLevel}:
\begin{itemize}
    \item set $\beta_{i+1} \coloneqq \beta_i \cdot (\min\{\beta_i, n^\eps\})^{\frac{\gamma}{4c}}$,
    \item replace Step 3 of \textsc{RelabelIntraLevel} with any \mpc\ algorithm that in $O(1)$ rounds selects $O\left(\frac{|S_i|}{(\min\{\beta_i, n^\eps\})^{\gamma}}\right)$ leaders for each level $i$ with high probability or deterministically using at most $n \beta_i^c$ global space and total computation, and
    \item replace the budget update rule in Step 4 of \textsc{RelabelIntraLevel} with \[b(v) \coloneqq b(v)\cdot (\min\{b(v), n^\eps\})^{\frac{\gamma}{4c}},\]
\end{itemize}
and by replacing the initial budget $\left(\frac{m}{n}\right)^{1/2}$ assigned to each vertex with $\left(\frac{m}{n}\right)^{1/2c}$ in Algorithm 1 of \cite{8948671}. Then, the connectivity algorithm of \cite{8948671} remains correct with the same asymptotic local and global space complexity.  Moreover, the resulting total computation is $O(m)$.
\end{lemma}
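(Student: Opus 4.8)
The plan is to reduce this to Lemma~\ref{lem:coy-blackbox1} by observing that our modifications are exactly the modifications allowed by that lemma, but applied with the constant $\gamma$ replaced by $\gamma/c$ (so that $\frac{\gamma}{4c} = \frac{(\gamma/c)}{4}$ matches the budget-update exponent of Lemma~\ref{lem:coy-blackbox1}), and with a rescaled initial budget. Concretely, I would first check that the deterministic hitting set algorithm of \Cref{thm:hs}, when invoked on the leader-election instance at level $i$ (which has $n' = |S_i|$ elements and set size $b = \beta_i$), returns a set of size $O(|S_i| \cdot (\min\{\beta_i, \memory\})^{-1/5})$, which is certainly $O\!\left(\frac{|S_i|}{(\min\{\beta_i, n^\eps\})^{\gamma}}\right)$ once $\gamma \le 1/5$ and $n^\eps \le \memory$; the latter holds by our choice $\eps = \delta/c \le \delta$. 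So Step~3 is legitimately replaced by a deterministic $O(1)$-round algorithm selecting the right number of leaders, satisfying the second bullet of Lemma~\ref{lem:coy-blackbox1} with $\gamma$ in that lemma set to $\min\{\gamma,1/5\}$. With that substitution the first and third bullets of our statement are literally the first and third bullets of Lemma~\ref{lem:coy-blackbox1}; hence correctness and the asymptotic local/global space bounds of the connectivity algorithm follow directly, provided the resource bound in the second bullet—$n\beta_i^c$ global space and total computation per level—is actually met and is globally affordable.

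The heart of the argument, and the main obstacle, is precisely this resource bookkeeping: \Cref{thm:hs} costs $n' \cdot \poly(b) \le n \cdot \beta_i^{\,c}$ global space and computation at level $i$, and since many levels may be processed in parallel within a single \mpc\ phase, we must argue the total over all active levels is still $O(n)$ global space and $O(m)$ total computation. The key quantitative fact to establish is that the vertices remaining at level $i$ number $O(n/\beta_i^{\,c-1})$ (or, more carefully, that $\sum_i |S_i| \cdot \beta_i^{\,c} = O(n \cdot \beta_0^{\,c}) $ where $\beta_0 = (m/n)^{1/2c}$), so that $n\beta_i^c$ aggregated is dominated by the $i=0$ term, which is $n \cdot (m/n)^{1/2} \le m$. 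This is exactly why the initial budget is rescaled from $(m/n)^{1/2}$ to $(m/n)^{1/2c}$: with the original budget the hitting-set blowup $\beta_i^{\,c}$ would push global space to $n \cdot (m/n)^{c/2} \gg m$. I would therefore (i) recompute, with the slower budget-growth rule $\beta_{i+1} = \beta_i (\min\{\beta_i,n^\eps\})^{\gamma/4c}$, the invariant from \cite{8948671,coyconnectivity2021} that the number of surviving vertices scales inversely with (a power of) the current budget, carrying the extra factor-of-$c$ through the exponents; (ii) verify $n^{c\eps} \le n^\delta$ so that all intermediate objects of size $\poly(\beta_i) \le \poly(n^\eps)$ fit within the regime where \Cref{thm:hs} is applicable and within local space $n^\delta$; and (iii) conclude that summing $n\beta_i^c$ over all levels telescopes to $O(n\beta_0^c) = O(m)$, and likewise for computation, using that the budget at least doubles (in the relevant sense) each level so the geometric series is dominated by its first term.

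The remaining steps are routine: the round complexity is unchanged because each level still takes $O(1)$ rounds and the number of levels is governed by how fast budgets climb from $(m/n)^{1/2c}$ to $n^\eps$ and then to $n$, which is $O(\log\log_{m/n} n)$ phases exactly as in \cite{8948671}—the factor $c = O(1)$ affects only the hidden constant. Likewise, the outer $O(\log D + \log\log_{m/n} n)$ bound and the correctness of the overall relabeling scheme are inherited verbatim from Lemma~\ref{lem:coy-blackbox1} once the three bullets are matched. So the proof skeleton is: (1) instantiate \Cref{thm:hs} as the Step-3 leader-selection subroutine and check its output size and per-level resource bound $n\beta_i^c$; (2) invoke Lemma~\ref{lem:coy-blackbox1} with $\gamma$ rescaled, obtaining correctness and the space bounds; (3) with the rescaled initial budget $(m/n)^{1/2c}$ and $\eps = \delta/c$, prove the aggregate resource bound $\sum_i n\beta_i^c = O(m)$ via the survival-count invariant, giving total computation $O(m)$. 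The one genuinely delicate point is step (3)'s invariant bookkeeping with the $c$-adjusted exponents; everything else is a reduction or a constant-factor adjustment.
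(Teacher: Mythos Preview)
Your proposal is correct and follows essentially the same route as the paper. The paper does not formally reduce to Lemma~\ref{lem:coy-blackbox1} as a black box---partly because that lemma does not cover the change of initial budget to $(m/n)^{1/(2c)}$---but instead re-verifies directly the three properties from \cite{8948671} (level bound, global space, sum of squared budgets); however, the technical core is identical to your step~(3): one shows $\beta_{i+1}^{\,c}\, n_{i+1} \le \beta_i^{\,c}\, n_i$ from the leader-count guarantee, then sums over the $O(\log\log n)$ levels and uses $\beta_0^{\,c} = (m/n)^{1/2}$ together with $m \ge n\log^{C} n$ to bound the total by $O(m)$. Your geometric-series phrasing of this sum is in fact slightly sharper than the paper's monotonicity-times-$L$ bound, but both reach the same conclusion.
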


\begin{proof}
    We need to show that all claims and lemmas involving the modified steps of Algorithm 1 of \cite{8948671} do not affect its correctness nor its bounds on local and global memory. As in \cite{coyconnectivity2021}, we need to prove the following three key properties:
    \begin{bracketenumerate}
        \item for any vertex $v$, the value of $\ell(v)$ never exceeds $O(\log \log_{m/n} n)$ (cf. \cite[Lemma 15]{8948671}),
        \item the global space used is $O(\memory_{Global})$ (cf. \cite[Lemma 17]{8948671}),
        \item the sum of the squares of the budgets does not exceed $O(\memory_{Global})$ (cf. \cite[Lemma 21]{8948671}).
    \end{bracketenumerate}
    \begin{bracketenumerate}
        \item Recall that the budget of each vertex is increased as $\beta_{i+1} \coloneqq \beta_i \cdot (\min\{\beta_i, n^\eps\})^{\frac{\gamma}{4c}}$ and that $\beta_0 = \left(\frac{m}{n}\right)^{1/2c}$. Since the budget of any vertex cannot exceed $n$, we have that there are at most $O(\log \log_{m/n} n)$ levels as required.
        
        \item Let $n_i$ denote the number of vertices which \textit{ever} reach level $i$ over the course of the algorithm. In the proof of Lemma 17 \cite{8948671}, it is shown that the total sum of the budget increases over the course of the algorithm is $O(m)$, namely
        \begin{equation*}
            \sum_{i=1}^L \beta_i n_i = O(m).
        \end{equation*}
        We extend this claim and prove that the total sum of the global space used by all hitting set instances over all iterations of the algorithm is bounded by $O(m)$, that is
        \begin{equation*}
            \sum_{i=1}^L \beta_i^c \cdot n_i = O(m).
        \end{equation*}
        Analogously to \cite{coyconnectivity2021}, we first show that $\beta_{i+1}^c \cdot n_{i+1} \le \beta_{i}^c \cdot n_{i}$. We have that the number of vertices at level $i$ removed from the graph (i.e., not marked as a leader) per vertex marked as leader is at least:
        \begin{equation*}
            \frac{|S_i \setminus L_i|}{|L_i|} = \frac{|S_i| - |L_i|}{|L_i|} = \Omega\left((\min\{\beta_i,n^\eps\})^\gamma\right) \gg (\min\{\beta_i,n^\eps\})^{\gamma/2}.
        \end{equation*}
        It then follows that
        \begin{align*}
            \beta_{i+1}^c \cdot n_{i+1} &= \left(\beta_i \cdot (\min\{\beta_i, n^\eps\})^{\frac{\gamma}{4c}}\right)^c n_{i+1} \\ &< \left(\beta_i^c \cdot (\min\{\beta_i, n^\eps\})^{\frac{\gamma}{4}}\right)\left(n_i (\min\{\beta_i,n^\eps\})^{-\gamma/2}\right) \\ &\le \beta_{i}^c \cdot n_{i}.
        \end{align*}
        Using the fact that the maximum possible level for a vertex is $L = O(\log \log n)$, we obtain
        \begin{equation*}
            \sum_{i=1}^{L} \beta_i^c \cdot n_i \le L\cdot (\beta_0^c \cdot n_0) \le  O(\log \log n) \cdot \left(\frac m n\right)^{\frac{1}{2}} \cdot n,
        \end{equation*}
        where the last inequality comes from the fact that $\beta_0 = \left(\frac{m}{n}\right)^{\frac{1}{2c}}$. Note that we can assume that $m \geq n \log^{20c} (n)$ and therefore each vertex has an initial budget of $\beta_0  = (m/n)^{1/2c} \ge \log^{10}(n) \gg O(\log \log n)$, as required by \cref{thm:hs}. This yields
        \begin{align*}
            O(\log \log n) \cdot \left(\frac m n\right)^{\frac{1}{2}} \cdot n \ll \left(\frac m n\right)^{\frac{1}{2}} \cdot \left(\frac m n\right)^{\frac{1}{2}} \cdot n = O(m).
        \end{align*}
        
        \item Follows by the same line of reasoning as in property (2).
    \end{bracketenumerate}
    
    By the choice of $c$, repeating the same calculations as in property (b) proves that the total computation required by running our deterministic hitting set algorithm over all instances in each iteration of the algorithm does not exceed $O(m)$. Moreover, Lemma \ref{lem:primitives} implies that all the other steps of the algorithm can be implemented with total computation $\tilde{O}(m)$.
\end{proof}

We are now ready to prove our main result.
\begin{proof}[Proof of \cref{thm:main}]
    We apply \cref{lem:blackbox1} using our Hitting Set for Leader Election algorithm from \cref{thm:hs} setting $\gamma = \frac 1 5$ (Note that $m \geq n \log^C(n)$ for a sufficiently large constant $C$ implies $\beta_0 \geq \log^{10}(n)$). Then, it follows directly from Lemma 6.4 of \cite{coyconnectivity2021} combined with \cref{lem:blackbox1} that copies of our hitting set algorithms can be run in parallel, for each possible level and in a constant number of rounds within optimal global space and $\tilde{O}(m)$ total computation. Thus, we proved that all relevant aspects of the proof of correctness have been adjusted in comparison to \cite{coyconnectivity2021, 8948671}. Finally, as noted in \cite{coyconnectivity2021}, our extension of Lemma 15 in \cite{8948671} proves that the number of iterations remains asymptotically the same and that the deterministic algorithms replacing the $O(1)$-round random sampling approach take asymptotically the same number of rounds. Thus, we conclude that the round complexity is not affected. 
\end{proof}

\bibliographystyle{alpha}
\bibliography{references}

\appendix

\end{document}